
\documentclass[11pt]{article}

\usepackage[utf8]{inputenc}
\usepackage{amsmath,amssymb,amsthm}
\usepackage{geometry}
\usepackage{graphicx}
\usepackage{thmtools, thm-restate}
\usepackage[colorlinks,citecolor=blue,urlcolor=blue,linkcolor=blue,bookmarks=false]{hyperref}
\usepackage{listings}
\usepackage[framemethod=tikz]{mdframed}
\usepackage[sort]{natbib}
\usepackage{float}
\usepackage{makecell}
\usepackage{cancel}
\usepackage[shortlabels]{enumitem}
\usepackage{bm}
\usepackage{caption, subcaption}
\usepackage{booktabs}

\allowdisplaybreaks


\newtheorem{theorem}{Theorem}
\newtheorem{lemma}{Lemma}

\newtheorem{proposition}{Proposition}

\newtheorem{algorithm}{Algorithm}

\theoremstyle{definition}

\newtheorem{cumulative-example}{Cumulative trimming example}
\newtheorem{smooth-cumulative-example}{Cumulative smooth trimming example}

\theoremstyle{remark}
\newtheorem{assumption}{Assumption}
\newtheorem{remark}{Remark}

\def\inprob{\stackrel{p}{\rightarrow}}
\def\indist{\rightsquigarrow}
\newcommand\ind{\protect\mathpalette{\protect\independenT}{\perp}}
\def\independenT#1#2{\mathrel{\rlap{$#1#2$}\mkern4mu{#1#2}}}


\DeclareSymbolFont{bbold}{U}{bbold}{m}{n}
\DeclareSymbolFontAlphabet{\mathbbold}{bbold}
\newcommand{\one}{\mathbbold{1}}

\def\bbE{\mathbb{E}}
\def\bbP{\mathbb{P}}
\def\bbR{\mathbb{R}}
\def\bbV{\mathbb{V}}

\usepackage{stackengine}  
\stackMath                


\title{{\fontsize{15pt}{18pt}\selectfont\textbf{Propensity score weighting across counterfactual worlds: longitudinal effects under positivity violations}}}

\author{\Large Alec McClean\footnote{Corresponding author: \texttt{hadera01@nyu.edu}}\, and Iv\'{a}n D\'{i}az \\[0.5em]
\emph{Division of Biostatistics, New York University Grossman School of Medicine}}

\begin{document}
\maketitle

\begin{abstract}
    When examining a contrast between two interventions, longitudinal causal inference studies frequently encounter positivity violations when one or both regimes are impossible to observe for some subjects. Existing weighting methods either assume positivity holds or produce effects that conflate interventions' impacts on ultimate outcomes with their effects on intermediate treatments and covariates. We propose a novel class of estimands---cumulative cross-world weighted effects---that weights potential outcome differences using propensity scores adapting to positivity violations cumulatively across timepoints and simultaneously across both counterfactual treatment histories. This new estimand isolates mechanistic differences between treatment regimes, is identifiable without positivity assumptions, and circumvents the limitations of existing longitudinal methods. Further, our analysis reveals two fundamental insights about longitudinal causal inference under positivity violations. First, while mechanistically meaningful, these effects correspond to non-implementable interventions, exposing a core interpretability-implementability tradeoff. Second, the identified effects faithfully capture mechanistic differences only under a partial common support assumption; violations cause the identified functional to collapse to zero, even when the causal effect is non-zero. We develop doubly robust-style estimators that achieve asymptotic normality and parametric convergence under nonparametric assumptions on the nuisance estimators. To this end, we reformulate challenging density ratio estimation as regression function estimation, which is achievable with standard machine learning methods. We illustrate our methods through analysis of union membership's effect on earnings.
\end{abstract}

\section{Introduction}

There is a large and fast-growing literature in causal inference for estimating treatment effects under violations of the positivity assumption. In longitudinal data, positivity requires that every subject has a non-zero probability of receiving every treatment regime under consideration. In longitudinal data, this can become increasingly difficult to satisfy with more timepoints as the number of treatment regimes increases exponentially \citep{hernan2020whatif}. While positivity violations are challenging in single timepoint data, longitudinal data present unique difficulties that existing methods struggle to address.

\bigskip

A popular approach to addressing positivity violations is to weight or trim the dataset to down-weight or exclude subjects with small probabilities of receiving the treatment regimes under which one would like to estimate mean potential outcomes \citep{crump2009dealing}. While well-studied with single timepoint data, standard methods do not easily generalize to address non-baseline positivity violations in longitudinal data and remain vulnerable to such violations. The core challenge is that violations occurring after baseline (non-baseline positivity violations) cannot be addressed through weighting on baseline covariates, but direct conditioning on time-varying covariates risks inducing collider bias by conditioning on post-treatment variables \citep{jensen2024identification, petersen2012diagnosing}. Existing longitudinal weighting methods typically sidestep this challenge by assuming positivity holds, and focus on weighting for minimizing estimation error rather than addressing violations of positivity \citep{imai2015robust, zeng2023propensity, viviano2021dynamic}.  An important exception is \citet{schomaker2024causal}, which considered weighting towards a particular longitudinal dose-response curve while addressing positivity violations, but this approach did not consider contrasts in effects, which are usually the target of interest with weighting.

\bigskip

More successful approaches for addressing positivity violations in longitudinal data have stemmed from ``stochastic longitudinal modified treatment policies" (S-LMTPs). First introduced in \citet{robins2004effects} and \citet{stock1989nonparametric}, S-LMTPs have subsequently evolved through various contributions \citep{van2007causal, haneuse2013estimation, diaz2012population, kennedy2019nonparametric, diaz2023nonparametric, richardson2013single, young2014identification}. Among these, a recent development for weighting is the ``flip" intervention, which flips subjects toward a target treatment regime using weights constructed from non-baseline covariates \citep{mcclean2025longitudinal}. This construction allows for dynamic weighting and trimming on non-baseline covariates, making it possible to avoid positivity violations at each timepoint.  However, flip interventions face an inherent limitation: because they are ``single-world'' interventions (in the sense of \citet{richardson2013single}) that sequentially adapt to prior counterfactual interventions, contrasts between potential outcomes under two sequences of flip interventions do not isolate the difference in mean potential outcomes under the two static regimes of interest. The dynamic nature of flip interventions means that the intervention at timepoint $t$ affects not only the final outcome but also subsequent covariates and treatments, which in turn affect future flip interventions. Consequently, differences in flip effects can reflect the intervention's impact on the entire treatment process rather than just the ultimate outcome, with the corresponding implications for interpretation.

\subsection{Our contributions}

To overcome the limitations of standard weighting methods in longitudinal settings and the drawbacks of contrasts of flip effects, we introduce the \emph{cumulative cross-world weighted effect}:

\begin{equation} \label{eq:cumulative-weight}
    \psi(\overline a_T, \overline a_T^\prime) := \bbE \left( \left\{ Y(\overline a_T) - Y(\overline a_T^\prime) \right\} \prod_{t=1}^T w_t \left[ p_t\{ \overline X_t(\overline a_{t-1}) \} \right] w_t^\prime \left[ p_t^\prime \{ \overline X_t(\overline a_{t-1}^\prime) \} \right] \right),
\end{equation} 
where $p_t\{ \overline X_t(\overline a_{t-1}) \} = \bbP \{ A_t(\overline a_{t-1}) = a_t \mid \overline X_{t-1}(\overline a_{t-1}) \}$ and $p_t^\prime \{ \overline X_t(\overline a_{t-1}^\prime) \} = \bbP \{ A_t(\overline a_{t-1}^\prime) = a_t^\prime \mid X_t(\overline a_{t-1}^\prime) \}$ are the natural propensity scores under interventions $\overline a_{t-1}$ and $\overline a_{t-1}^\prime$, respectively, and $w_t, w_t^\prime$ are weighting functions.

\bigskip

This effect re-weights each unit using the product of \emph{natural} propensity scores that \emph{would occur} under \emph{both} treatment regimes of interest.  The resulting unit-level weight adapts to non-baseline positivity violations and isolates the mean difference of the two potential outcomes of interest. To our knowledge, this is a novel causal estimand that corresponds to the effect researchers often have in mind when conducting longitudinal weighting or trimming.

\bigskip

We refer to $\psi(\overline a_T,\overline a_T')$ as a \emph{cumulative cross-world} weighted effect.  
The term \emph{cumulative} reflects the product over time indices $t=1,\dots,T$, while \emph{cross-world} denotes the combination of propensity scores from the two counterfactual histories generated by interventions $\overline a_{t-1}$ and $\overline a_{t-1}^\prime$.  Through the choice of weight functions $w_t$ and $w_t^\prime$, this quantity can capture the desired mean difference in potential outcomes while adapting to non‐baseline positivity violations.

\bigskip

We examine the cumulative cross-world weighted effect in \eqref{eq:cumulative-weight} in detail, first giving several example weights and then discussing its benefits and drawbacks compared to other approaches. We show that while this weighted effect is ``mechanism-relevant''---isolating the mean difference in potential outcomes under the two regimes of interest while adapting to positivity violations---it is not ``policy-relevant'' since it represents a ``cross-world'' effect that cannot be implemented as a practical intervention.

\bigskip

We establish identification of the cumulative cross-world weighted effect, showing that this is possible without the positivity assumption, but requires the strong sequential randomization assumption, a stronger exchangeability assumption than necessary for identifying many longitudinal effects.  We then develop estimation methods. For smooth weights, such as overlap weights or smooth trimming weights, we develop efficient doubly robust-style estimators based on the efficient influence function of the weighted effect. Consequently, we establish that weak convergence is feasible under nonparametric assumptions on the relevant nuisance functions. These include a conditional density ratio across time-varying covariates, which is typically difficult to estimate. We show how this ratio can be reformulated as a ratio of regression functions, which can be estimated with typical machine learning methods. 

\bigskip

Finally, a crucial conceptual clarification we provide is a careful examination of conditions under which cumulative cross-world weighted effects remain meaningful. We explicitly formalize a partial common support assumption, distinct from standard positivity, ensuring covariate distributions overlap sufficiently across regimes. We show this assumption is necessary to ensure the identified functional faithfully represents the underlying causal contrast, clearly delineating when weighted effects accurately capture the mechanistic relationships of interest.

\subsection{Structure of the paper}

Section~\ref{sec:notation} provides generic mathematical notation while Section~\ref{sec:longitudinal-setup} introduces our notation and setup for longitudinal data, including causal assumptions. Section~\ref{sec:weighted-effect} re-introduces the cumulative cross-world weighted effect, provides several example weight functions, and discusses its benefits and drawbacks.  Section~\ref{sec:identification} identifies the effect while Section~\ref{sec:estimation} proposes a doubly robust-style estimator.  Section~\ref{sec:data-analysis} illustrates our methods with a data analysis and Section~\ref{sec:discussion} concludes and discusses avenues for future work.

\subsection{Mathematical notation} \label{sec:notation}

For a function $f(Z)$, we use $\lVert f \rVert = \sqrt{\int f(z)^2 d\bbP(z)}$ to denote the $L_2(\bbP)$ norm, $\lVert f \rVert_p = \left( \int f(z)^p d\bbP(z) \right)^{1/p}$ to denote the generic $L_p(\bbP)$ norm for $p \geq 1$, $\bbP(f) = \int_{\mathcal{Z}} f(z) d\bbP(z)$ to denote the average with respect to the underlying distribution $\bbP$, and $\bbP_n(f) = \frac{1}{n} \sum_{i=1}^{n} f(Z_i)$ to denote the empirical average with respect to $n$ observations.  In a standard abuse of notation, when $B$ is an event we let $\bbP(B)$ denote the probability of $B$.  We also denote expectation and variance with respect to the underlying distribution by $\bbE$ and $\bbV$, respectively.  We use $a \wedge b$ for minimum and $a \vee b$ for maximum, and $a \lesssim b$ to mean $a \leq Cb$ for some constant $C$. We use $\indist$ to denote convergence in distribution, and $\inprob$ for convergence in probability. Additionally, we use $o_\bbP(\cdot)$ to denote convergence in probability to zero, i.e., if $X_n$ is a sequence of random variables then $X_n = o_\bbP (r_n)$ implies $\left| \frac{X_n}{r_n} \right| \inprob 0$.

\section{Setup for longitudinal data} \label{sec:longitudinal-setup}

We assume $n$ observations drawn iid from some distribution $\bbP$ in a space of distributions $\mathcal{P}$; i.e., we observe data $\{ Z_i \}_{i=1}^{n} \stackrel{iid}{\sim} \bbP \in \mathcal{P}$. We assume each observation consists of longitudinal data over $T$ timepoints, so that
\[
Z = (X_1, A_1, X_2, A_2, \dots, X_T, A_T, Y),
\]
where $X_t \in \bbR^d$ are time-varying covariates ($X_1$ are baseline covariates), $A_t \in \{0,1\}$ is a time-varying binary treatment, and $Y \in \bbR$ is the ultimate outcome of interest. For a time-varying random variable $O_t$, let $\overline O_t = (O_1, \dots, O_t)$ denote its history up to time $t$ and $\underline O_t = (O_t, \dots, O_T)$ denote its future from time $t$. Let $H_t = (\overline X_t, \overline A_{t-1})$ denote covariate and treatment history up until treatment in timepoint $t$. 

\bigskip

We formalize the definition of causal effects using a nonparametric structural equation model (NPSEM) \citep{pearl2009causality}. We assume the existence of deterministic functions $\{ f_{X,t}, f_{A,t}\}_{t=1}^{T}$ and $f_Y$ such that 
\begin{align*}
    X_t &= f_{X,t} ( A_{t-1}, H_{t-1}, U_{X,t} ), \\
    A_t &= f_{A,t} ( H_t, U_{A,t} ) \text{, and } \\
    Y &= f_Y ( A_T, H_T, U_Y ).
\end{align*}
Here, $\Big\{ \big\{ U_{X,t}, U_{A,t}: t \in \{1, \ldots, T\} \big\}, U_Y \Big\}$ is a vector of exogenous variables. Subsequently, we'll define restrictions on their joint distribution that facilitate identification of causal effects. We will define the effects in terms of hypothetical interventions in which equation $A_t = f_{A,t}(H_t, U_{A,t})$ is removed from the structural model and the exposure is assigned as a new treatment $a_t$. An intervention that sets exposures up to time $t-1$ to $\overline a_{t-1}$ generates counterfactual covariates $X_t(\overline a_{t-1}) = f_{X,t} \big\{ a_{t-1}, H_{t-1}(\overline a_{t-2}), U_{X,t} \big\}$, where the counterfactual treatment and covariate history is defined recursively as $H_t(\overline a_{t-1}) = \{ X_t(\overline a_{t-1}), a_{t-1}, H_{t-1}(\overline a_{t-2}) \}$.  The variable $A_t(\overline a_{t-1})$ is called the \emph{natural} treatment value \citep{richardson2013single,   young2014identification}, and represents the possibly counterfactual value of the treatment that would have been observed at time $t$ under an intervention carried out up to time $t-1$ but discontinued thereafter. Similarly, we refer to  $X_t(\overline a_{t-1})$ as the \textit{natural covariate} value. Moreover, we define the counterfactual covariate history at time $t$ under intervention $\overline a_{t-1}$ as $\overline X_t(\overline a_{t-1}) = \big\{ X_s(\overline a_{s-1}) \big\}_{s=1}^{t}$. Finally, an intervention in which all treatment variables up to $t=T$ are intervened on generates a counterfactual outcome $Y \left( \overline a_T \right) = f_Y \big\{ a_T, H_T(\overline a_{T-1}), U_Y \big\}$. 

\subsection{Causal assumptions}

The NPSEM implicitly encodes the consistency assumption due to the modular definition of counterfactuals within the framework, and because one subject's information does not depend on another's. This assumption would be violated if there were interference between subjects \citep{tchetgen2012causal}. We require the following exchangeability assumption on the exogeneous variables.
\begin{assumption}[Strong sequential randomization] \label{asmp:strong-seq-exch}
$U_{A,t} \ind\! \{ \underline U_{X,t+1},\! \underline U_{A,t+1},\! U_Y \!\} \mid\! H_t \ \forall \ t \leq T$.
\end{assumption}
Assumption~\ref{asmp:strong-seq-exch} is satisfied if common causes of treatment $A_t$ and future covariates \emph{and treatments} are measured. This assumption is similar to that required by \citet{richardson2013single} (cf. Theorem 31).  Finally, note that we do not require the typical positivity assumption, which says that time-varying propensity scores are bounded away from zero and one ($0 < \bbP(A_t = 1 \mid H_t) < 1$), because we will consider longitudinal weights which adapt to positivity violations.

\section{Cumulative cross-world weighted treatment effects} \label{sec:weighted-effect}

Standard weighting methods fail to address non-baseline positivity violations in longitudinal data, while existing alternatives like flip interventions introduce additional effects on future covariates and treatments beyond isolating the causal effect of interest. Here, we examine our proposed weighted treatment effect in more detail. Recall that it is defined as 

\begin{equation} \tag{\ref{eq:cumulative-weight}}
    \bbE \left( \left\{ Y(\overline a_T) - Y(\overline a_T^\prime) \right\} \prod_{t=1}^T w_t \left[ p_t\{ \overline X_t(\overline a_{t-1}) \} \right] w_t^\prime \left[ p_t^\prime \{ \overline X_t(\overline a_{t-1}^\prime) \} \right] \right).
\end{equation}
\normalsize
The key components of this estimand are:
\begin{itemize}
    \item $\overline a_T$ and $\overline a_T^\prime$ are the two binary treatment regimes of interest over $T$ timepoints;
    \item $Y(\overline a_T) - Y(\overline a_T^\prime)$ is the difference in potential outcomes under the two regimes, which isolates the individual-level causal effect of interest;
    \item $w_t(\cdot)$ and $w_t^\prime(\cdot)$ are time-specific weight functions that can be chosen to address positivity violations;
    \item $p_t\{ \overline X_t(\overline a_{t-1}) \} = \bbP\left\{ A_t(\overline a_{t-1}) = a_t \mid \overline X_t (\overline a_{t-1}) \right\}$ and $p_t^\prime \{ \overline X_t(\overline a_{t-1}^\prime) \} = \bbP \left\{ A_t(\overline a_{t-1}^\prime) = a_t^\prime \mid \overline X_t(\overline a_{t-1}^\prime) \right\}$ are the natural propensity scores under regimes $\overline a_{t-1}$ and $\overline a_{t-1}^\prime$, respectively; and
    \item $\overline X_t(\overline a_{t-1})$ and $\overline X_t(\overline a_{t-1}^\prime)$ are the natural covariate histories that would have evolved under each regime up to time $t$.
\end{itemize}
The fundamental insight underlying this weighted effect is that it achieves two distinct goals simultaneously: it isolates the causal contrast of interest, $Y(\overline a_T) - Y(\overline a_T^\prime)$, and addresses positivity violations through the cumulative weights.

\bigskip

There are a variety of weight functions that one might consider. While one could attempt to balance directly on counterfactual covariates---as in \citet{viviano2021dynamic}, which requires additional local structural model assumptions---we focus on weights that depend on the propensity score. A simple example is trimming weights:
\begin{align*}
    w_t \left[ p_t\{ \overline X_t(\overline a_{t-1}) \} \right] &= \one \big[ p_t\{ \overline X_t(\overline a_{t-1}) \} > \varepsilon \big], \\
    w_t^\prime \left[ p_t^\prime \{ \overline X_t(\overline a_{t-1}^\prime) \} \right] &= \one \big[ p_t^\prime \{ \overline X_t(\overline a_{t-1}^\prime) \} > \varepsilon^\prime \big],
\end{align*}
which trim to subjects whose natural propensity scores would both be greater than some thresholds $\varepsilon, \varepsilon^\prime \geq 0$. Table~\ref{tab:weights} gives several other examples.

\begin{table}[ht] 
    \centering 
    \begin{tabular}{p{5cm} p{4.5cm} p{4.5cm}}
        \toprule
        \textbf{Type of weighting} & $w_t \left( p_t \right)$ & $w_t^\prime \left( p_t^\prime \right)$ \\
        \midrule
        No weighting & \( 1 \) & \( 1 \)  \\
        \addlinespace
        Weighting towards $\overline a_T$ only & \( p_t \) & 1 \\
        \addlinespace
        Weighting towards $\overline a_T^\prime$ only & \( 1 \) & \( p_t^\prime \) \\
        \addlinespace
        Overlap weighting & \( p_t \) & \( p_t^\prime \) \\
        \addlinespace
        Trimming & \( \one ( p_t \geq \varepsilon ) \) for $\varepsilon \geq 0$ & \( \one ( p_t^\prime \geq \varepsilon^\prime ) \) for $\varepsilon^\prime \geq 0$ \\
        \addlinespace
        Smooth trimming & \( w_t \big( p_t  ; \varepsilon \big) \), where $w_t(x; \varepsilon)$ approximates $\one (x \geq \varepsilon)$ & \( w_t \big( p_t^\prime ; \varepsilon \big) \), where $w_t(x; \varepsilon)$ approximates $\one (x \geq \varepsilon)$\\
        \bottomrule
    \end{tabular}
    \caption{Weights for longitudinal weighting and trimming}
    \label{tab:weights}
\end{table}

\subsection{Benefits and drawbacks}

The weighted effect in \eqref{eq:cumulative-weight} offers distinct advantages and disadvantages compared to alternatives like flip interventions, illustrating an inherent tradeoff between mechanism-relevance and policy-relevance.

\bigskip

Here, we assert a ``mechanism-relevant'' effect would be the effect that best reflects the causal contrast $Y(\overline a_T) - Y(\overline a_T^\prime)$. The design of the weighted effect in \eqref{eq:cumulative-weight} maximizes its ``mechanism-relevance'' by isolating the causal contrast $Y(\overline a_T) - Y(\overline a_T^\prime)$ while addressing positivity violations through the weight functions.  This focus on the potential outcome difference is illustrated by the following null preservation property:

\begin{proposition}
    If $\bbP \{ Y(\overline a_T) = Y(\overline a_T^\prime) \} = 1$ then $\psi(\overline a_T, \overline a_T^\prime) = 0$.
\end{proposition}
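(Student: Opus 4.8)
The plan is to exploit the multiplicative structure of the integrand in \eqref{eq:cumulative-weight}. Write the summand as a product of two factors: the individual-level causal contrast $D := Y(\overline a_T) - Y(\overline a_T^\prime)$ and the cumulative cross-world weight $W := \prod_{t=1}^T w_t[p_t\{\overline X_t(\overline a_{t-1})\}]\, w_t^\prime[p_t^\prime\{\overline X_t(\overline a_{t-1}^\prime)\}]$, so that $\psi(\overline a_T,\overline a_T^\prime) = \bbE(D \cdot W)$. The whole point of the estimand's design is that $D$ appears as a clean multiplicative factor, so the null hypothesis on $D$ should propagate directly to $\psi$.

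First I would note that, under the hypothesis $\bbP\{Y(\overline a_T) = Y(\overline a_T^\prime)\} = 1$, we have $D = 0$ $\bbP$-almost surely. Next I would check that $W$ is $\bbP$-a.s. finite: each factor $w_t$ and $w_t^\prime$ is evaluated at a propensity score taking values in $[0,1]$, and for every weighting scheme in Table~\ref{tab:weights} the weight function maps $[0,1]$ into $[0,1]$ (indicator trimming weights take values in $\{0,1\}$; overlap and ``weighting towards'' weights are propensity scores themselves; smooth trimming weights are bounded approximations of an indicator), so $0 \le W \le 1$. Consequently $D \cdot W = 0$ $\bbP$-a.s., being the product of an a.s.-zero random variable with a bounded one, and since this product is integrable, $\psi(\overline a_T,\overline a_T^\prime) = \bbE(D \cdot W) = 0$.

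Two remarks on what makes this argument go through. It is purely a statement about the full-data (counterfactual) distribution and uses neither the identification result of Section~\ref{sec:identification} nor any positivity condition---which is precisely the content of a null preservation property. The only place any care is needed is the finiteness of $W$: the conclusion $0 \cdot W = 0$ is unambiguous because the weights under consideration are bounded, whereas if one admitted unbounded weights (e.g., stabilized inverse-propensity weights) one would instead need an integrability caveat such as $\bbE(W) < \infty$ to rule out a $0 \cdot \infty$ pathology. For the weight classes listed here no such caveat is required. I do not anticipate a genuine obstacle; the substance of the proposition is conceptual---the estimand is constructed so the contrast factors out---rather than technical, and the proof reduces to the two observations above.
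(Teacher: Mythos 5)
Your proof is correct and is essentially the argument the paper intends: the paper offers no separate proof precisely because the contrast $Y(\overline a_T)-Y(\overline a_T^\prime)$ enters \eqref{eq:cumulative-weight} as a multiplicative factor, so it vanishes almost surely under the hypothesis and the expectation is zero since the weight product is a.s.\ finite. Your side remark about needing $\bbE(W)<\infty$ for unbounded weights is not quite necessary---a.s.\ finiteness of $W$ already gives $0\cdot W=0$ a.s.---but this does not affect the validity of the proof.
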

This result shows that the cumulative cross-world weighted effect is zero whenever the two potential outcomes under consideration are almost surely equal. This is a useful result. In particular, alternative proposals, like flip effects, do not satisfy it (cf. , \citet{mcclean2025longitudinal}, Section 4.2). As discussed in \citet{mcclean2025longitudinal}, because flip interventions affect subsequent treatments and outcomes as well as the ultimate outcome of interest, the difference in mean potential outcomes under two flip interventions will typically fail to isolate the difference in the two potential outcomes of interest. 

\bigskip

A second related advantage is that the cumulative cross-world weighted effect corresponds to a weighted population of interest. Each subject has their own weight value
\[
\prod_{t=1}^{T} w_t \left[ \bbP \left\{ A_{t,i}(\overline a_{t-1}) = a_t \mid \overline X_{t,i}(\overline a_{t-1}) \right\} \right] w_t^\prime \left[ \bbP \left\{ A_{t,i}(\overline a_{t-1}^\prime) = a_t^\prime \mid \overline X_{t,i}(\overline a_{t-1}^\prime) \right\} \right],
\]
where $\overline X_{t,i}(\overline a_{t-1}), \overline X_{t,i} (\overline a_{t-1}^\prime)$ are the natural covariate and treatment history for unit $i$. Supposing one can identify these weights from the observed data (see Section~\ref{sec:identification}), it is possible to assign a weight to each individual in the study and examine summary statistics of the weights.

\bigskip

However, this mechanism-relevance comes at a cost in terms of ``policy-relevance''---whether a causal effect corresponds to an intervention that could be implemented in practice.  The weighted treatment effect in \eqref{eq:cumulative-weight} is ``cross-world'' because it incorporates a product of weights under two different regimes, $\overline a_t$ and $\overline a_t^\prime$. As a result, the interventions that yields these effects rely on cross-world propensity scores, which one could never observed in practice; e.g., for one individual, we cannot construct an experiment to yield both $X_t(\overline a_{t-1})$ and $X_t(\overline a_{t-1}^\prime)$ simultaneously. Consequently, these effects cannot be falsified experimentally. This limitations parallels natural effects in mediation \citep{andrews2021insights, richardson2013single}.  

\bigskip

A related disadvantage is that a stronger assumption is required for identification. Specifically, we require strong sequential randomization, in Assumption~\ref{asmp:strong-seq-exch}. By contrast, flip interventions can be constructed so they are identifiable under only standard sequential exchangeability (see, \citet{mcclean2025longitudinal}, Theorem 1 and subsequent remarks).

\begin{remark}
    In this section, we have emphasized the tradeoff between mechanism-relevance and policy-relevance. This tradeoff arises only because of positivity violations and because we allow interventions to affect subsequent covariates as well as the outcome of interest. If positivity were satisfied for both $\overline a_T$ and $\overline a_T^\prime$, then one could identify and estimate $\bbE \{ Y(\overline a_T) - Y(\overline a_T^\prime) \}$, which is both policy-relevant and mechanism-relevant. Or, if interventions only affected outcomes, but not intermediate covariates, then one could know that conditional covariate densities are constant across different treatment conditioning sets and therefore the cross-world nature of $\overline X_t(\overline a_{t-1})$ would disappear, because $\overline X_t(\overline a_{t-1})$ and $\overline X_t(\overline a_{t-1}^\prime)$ would be equal in distribution, by assumption. Since neither simplification typically holds, this tradeoff arises frequently in longitudinal data.
\end{remark}

\begin{remark}
    We hesitate to recommend whether the cumulative cross-world weighted effect is categorically better or worse than alternatives like flip effects. Rather, the key insight is that different approaches involve fundamental tradeoffs, and the choice depends on whether mechanism-relevance or policy-relevance is prioritized for a given research question.
\end{remark}

\section{Identification} \label{sec:identification}

This section establishes identification of the cumulative cross-world weighted effect $\psi(\overline a_T, \overline a_T^\prime)$, from \eqref{eq:cumulative-weight}. Our identification strategy proceeds in two steps, reflecting the cross-world nature of the estimand: first, we identify the natural propensity scores that appear in the weight functions; second, we establish identification of the cumulative cross-world effect with a weighted g-formula.

\bigskip

The weight functions depend on natural propensity scores of the form $\bbP \{ A_t(\overline a_{t-1}) = a_t \mid \overline X_t(\overline a_{t-1}) \}$ and $\bbP \{ A_t(\overline a_{t-1}^\prime) = a_t^\prime \mid \overline X_t(\overline a_{t-1}^\prime)  \}$. These require separate identification within the main identification result. To streamline the presentation, we introduce simplified notation for the identified analogs to these natural propensity scores:
\begin{align}
    \pi_t(\overline X_t) &= \bbP (A_t = a_t \mid \overline X_t, \overline A_{t-1} = \overline a_{t-1}) \text{ and } \\
    \pi_t^\prime(\overline X_t) &= \bbP (A_t = a_t^\prime \mid \overline X_t, \overline A_{t-1} = \overline a_{t-1}^\prime).
\end{align}
Moreover, for the rest of the paper, by convention we set the propensity scores to zero whenever the conditioning event has probability zero. The next result establishes conditions for identification.
\begin{lemma} \label{lem:id-prop-scores}
    Under the NPSEM and Assumption~\ref{asmp:strong-seq-exch}. Then, 
    \begin{enumerate}
        \item If $\pi_s(\overline X_s) > 0$ for all $s < t$ then $\bbP \{ A_t(\overline a_{t-1}) = a_t \mid \overline X_t(\overline a_{t-1})  \} = \pi_t(\overline X_t)$, and 
        \item if $\pi_s^\prime(\overline X_s) > 0$ for all $s < t$ then $\bbP \{ A_t(\overline a_{t-1}^\prime) = a_t^\prime \mid \overline X_t(\overline a_{t-1}^\prime)  \} = \pi_t^\prime(\overline X_t)$.
    \end{enumerate}
\end{lemma}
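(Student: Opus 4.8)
The plan is to reduce the statement to the standard g-formula identification of the \emph{joint} law of $\{\overline X_t(\overline a_{t-1}), A_t(\overline a_{t-1})\}$ and then read off the relevant conditional. Since part 2 is part 1 with every object replaced by its primed analogue, I would prove only part 1. The key auxiliary claim is: under the hypothesis $\pi_s(\overline X_s)>0$ a.s.\ for all $s<t$, for every bounded measurable $\phi$,
\[
\bbE\bigl[\phi\{\overline X_t(\overline a_{t-1}), A_t(\overline a_{t-1})\}\bigr] = \bbE\Bigl[\phi(\overline X_t, A_t)\prod_{s=1}^{t-1}\frac{\one(A_s = a_s)}{\pi_s(\overline X_s)}\Bigr],
\]
with the convention that the $s$-th factor is $0$ on the (a.s.-null) event $\{\pi_s(\overline X_s)=0\}$. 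Equivalently, $\mathrm{Law}\{\overline X_t(\overline a_{t-1}), A_t(\overline a_{t-1})\}$ has g-formula ``density'' $\prod_{s\le t} dP(x_s\mid \overline x_{s-1}, \overline A_{s-1}=\overline a_{s-1})\cdot dP(a_t\mid \overline x_t, \overline A_{t-1}=\overline a_{t-1})$.

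I would prove the auxiliary claim by induction on $t$. The base case $t=1$ is immediate: there is no treatment before time $1$, so $\overline X_1(\overline a_0)=X_1$ and $A_1(\overline a_0)=A_1$, and both sides equal $\bbE[\phi(X_1,A_1)]$. For the inductive step, pass from level $t$ to level $t+1$ via two moves. First, append one timepoint using the structural equations: $X_{t+1}(\overline a_t)=f_{X,t+1}\{a_t,(\overline X_t(\overline a_{t-1}),\overline a_{t-1}),U_{X,t+1}\}$ and $A_{t+1}(\overline a_t)=f_{A,t+1}\{(\overline X_{t+1}(\overline a_t),\overline a_t),U_{A,t+1}\}$, noting that the covariate history through time $t$ is unaffected by the time-$t$ intervention, i.e.\ $\overline X_t(\overline a_t)=\overline X_t(\overline a_{t-1})$. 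Second, replace the natural treatment $A_t(\overline a_{t-1})$ by the intervened value $a_t$: on the observed event $\{\overline A_t=\overline a_t\}$ — which has positive conditional probability given history precisely because $\pi_s(\overline X_s)>0$ for $s\le t$ — consistency yields $\overline X_t=\overline X_t(\overline a_{t-1})$, $A_t=A_t(\overline a_{t-1})$, and $X_{t+1}=X_{t+1}(\overline a_t)$, so the counterfactual and observed quantities coincide there. Assumption~\ref{asmp:strong-seq-exch} applied at time $t$, namely $U_{A,t}\ind\{U_{X,t+1},U_{A,t+1},\dots\}\mid H_t$, is exactly what licenses integrating out $(U_{X,t+1},U_{A,t+1})$ against the conditional law given $H_t$ rather than given $(H_t,A_t)$; combining this with the inductive hypothesis (to put the level-$t$ expectation in the inverse-propensity form above) and the extra weight $\one(A_t=a_t)/\pi_t(\overline X_t)$ produces the level-$(t+1)$ identity. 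This is a variant of the Richardson--Robins FFRCISTG identification (their Theorem 31) specialized to the time-ordered NPSEM.

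Granting the auxiliary claim, the lemma follows by extracting a conditional density. The g-formula density of $\{\overline X_t(\overline a_{t-1}),A_t(\overline a_{t-1})\}$ factors as $[$g-formula density of $\overline X_t(\overline a_{t-1})]\times dP(a_t\mid \overline x_t,\overline A_{t-1}=\overline a_{t-1})$, so dividing shows that the conditional law of $A_t(\overline a_{t-1})$ given $\overline X_t(\overline a_{t-1})=\overline x_t$ is $dP(\cdot\mid \overline x_t,\overline A_{t-1}=\overline a_{t-1})$, whose value at $a_t$ is $\pi_t(\overline x_t)$. The hypothesis $\pi_s(\overline X_s)>0$ for $s<t$ together with the zero-by-convention rule for null conditioning events ensures the support of $\overline X_t(\overline a_{t-1})$ lies in the region where $\pi_t$ is well defined, so the equality holds $\mathrm{Law}\{\overline X_t(\overline a_{t-1})\}$-a.s.

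I expect the main obstacle to be the bookkeeping in the inductive step: Assumption~\ref{asmp:strong-seq-exch} conditions on the \emph{endogenous} history $H_t=(\overline X_t,\overline A_{t-1})$, whereas the objects we must control, such as $\overline X_t(\overline a_{t-1})$, are deterministic functions of the exogenous noises $\overline U_{X,t}$, and these two conditioning $\sigma$-fields agree only on the event $\{\overline A_{t-1}=\overline a_{t-1}\}$. Arranging the induction so that every use of the assumption is made on that event (via consistency), and so that the reweighting by $\prod_s \one(A_s=a_s)/\pi_s(\overline X_s)$ is applied only where it is finite, is the delicate part; the remaining steps (the structural-equation substitutions, the density factorization, and the support/convention checks) are routine.
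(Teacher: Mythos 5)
Your overall route—first identify the \emph{joint} law of $\{\overline X_t(\overline a_{t-1}),A_t(\overline a_{t-1})\}$ via an inverse-propensity/g-formula representation proved by induction, then read off the conditional as a density ratio—is a legitimate alternative to the paper's argument, which instead manipulates the conditional probability directly as a ratio of conditional densities and swaps in the conditioning events $\{A_s=a_s\}$ one timepoint at a time using strong sequential randomization, positivity, and consistency. However, as written your inductive step has a genuine gap: the inductive hypothesis you state is an identity for bounded functions of the counterfactual \emph{observables} $\{\overline X_t(\overline a_{t-1}),A_t(\overline a_{t-1})\}$ only, i.e.\ it pins down their law, but the level-$(t+1)$ quantities are functions of $\overline X_t(\overline a_{t-1})$ \emph{jointly with the fresh noises} $(U_{X,t+1},U_{A,t+1})$, and their joint law with the counterfactual history is not determined by the level-$t$ law alone (the covariate noises $U_{X,s}$ may be arbitrarily dependent across time). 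So the step ``integrate out $(U_{X,t+1},U_{A,t+1})$ against their conditional law given $H_t$ and then apply the inductive hypothesis'' does not go through: to integrate out the fresh noises on the counterfactual side you need their conditional law given $\overline X_t(\overline a_{t-1})$, and relating that to the conditional law given the observed $(H_t,\overline A_{t-1}=\overline a_{t-1})$ is exactly the content you are trying to prove, not something the stated hypothesis supplies. You flag this as ``bookkeeping,'' but it is the heart of the argument.

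The standard fix is to strengthen the inductive claim so it carries the future exogenous variables along: prove, for all bounded measurable $g$,
\begin{equation*}
\bbE\bigl[g\{\overline X_t(\overline a_{t-1}),\,U_{A,t},\,\underline U_{X,t+1},\,\underline U_{A,t+1},\,U_Y\}\bigr]
=\bbE\Bigl[\Bigl(\prod_{s=1}^{t-1}\tfrac{\one(A_s=a_s)}{\pi_s(\overline X_s)}\Bigr)\,g\{\overline X_t,\,U_{A,t},\,\underline U_{X,t+1},\,\underline U_{A,t+1},\,U_Y\}\Bigr].
\end{equation*}
With this version the step from $t$ to $t+1$ uses only Assumption~\ref{asmp:strong-seq-exch} at time $t$ (note the level-$(t+1)$ test function does not involve $U_{A,t}$, since the intervention sets $A_t=a_t$), the fact that $\one(A_t=a_t)/\pi_t(\overline X_t)$ has conditional mean one given $H_t$ on $\{\overline A_{t-1}=\overline a_{t-1},\pi_t>0\}$, and consistency to rewrite $f_{X,t+1}(a_t,\overline X_t,\overline a_{t-1},U_{X,t+1})$ as $X_{t+1}$ on $\{\overline A_t=\overline a_t\}$; your auxiliary claim then follows by taking $g$ to encode $A_t(\overline a_{t-1})=f_{A,t}(\overline X_t(\overline a_{t-1}),\overline a_{t-1},U_{A,t})$, and the final ratio/extraction step is routine as you describe. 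Once repaired this way, your proof is correct and in fact delivers a slightly stronger intermediate result (identification of the full joint law, in the spirit of Richardson--Robins Theorem 31) than the paper's more economical direct computation of the single conditional probability.
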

All proofs are delayed to the appendix. This result shows that, supposing positivity holds for all prior timepoints, then the natural propensity score at timepoint $t$ is identified.  We elaborate further on this result  after stating the main result, Theorem~\ref{thm:id-cumulative}, which identifies the full weighted effect.

\begin{theorem}[Identification of cumulative cross-world trimmed effects] \label{thm:id-cumulative}
    Suppose the NPSEM and Assumption~\ref{asmp:strong-seq-exch} hold. Moreover, suppose the weights satisfy two conditions:
    \begin{enumerate}[left=0pt]
        \item  $\pi_t(\overline x_t) \pi_t^\prime(\overline x_t) = 0 \implies w_t\{ \pi_t(\overline x_t) \} w_t^\prime\{ \pi_t^\prime(\overline x_t) \} = 0$. \label{cond:positivity}
    \end{enumerate}
    Then,
    \begin{align}
        \psi(\overline a_T, \overline a_T^\prime) &= \int_{\overline{\mathcal{X}}_T} \bbE ( Y \mid \overline a_T, \overline x_T) \prod_{t=1}^T w_t\{ \pi_t(\overline x_t) \}w_t^\prime\{ \pi_t^\prime(\overline x_t)\} d\bbP(x_t \mid \overline A_{t-1} = \overline a_{t-1}, \overline x_{t-1}) \label{eq:id-1} \\
        &- \int_{\overline{\mathcal{X}}_T} \bbE ( Y \mid \overline a_T^\prime, \overline x_T) \prod_{t=1}^T w_t\{ \pi_t(\overline x_t) \}w_t^\prime\{ \pi_t^\prime(\overline x_t)\} d\bbP(x_t \mid \overline A_{t-1} = \overline a_{t-1}^\prime, \overline x_{t-1}). \label{eq:id-2}
    \end{align}    
\end{theorem}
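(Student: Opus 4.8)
The plan is to prove the identity in three moves: split the estimand by linearity into two single-regime pieces, identify the natural (counterfactual) propensity scores that sit inside the weights, and then run a sequential g-computation. For the first move, by linearity of expectation write $\psi(\overline a_T,\overline a_T^\prime)=\psi_1-\psi_2$ with $\psi_1=\bbE\{Y(\overline a_T)\,\Omega\}$, $\psi_2=\bbE\{Y(\overline a_T^\prime)\,\Omega\}$, and $\Omega=\prod_{t=1}^T w_t[p_t\{\overline X_t(\overline a_{t-1})\}]\,w_t^\prime[p_t^\prime\{\overline X_t(\overline a_{t-1}^\prime)\}]$ the common cross-world weight. It suffices to show that $\psi_1$ equals the first integral \eqref{eq:id-1}; the companion identity for $\psi_2$, matching \eqref{eq:id-2}, follows by the mirror argument, interchanging which regime indexes the outcome factor and the g-formula density while leaving $\Omega$ unchanged.

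For the second move, I would invoke Lemma~\ref{lem:id-prop-scores} timepoint by timepoint, together with condition~\ref{cond:positivity} and the convention that propensity scores vanish on probability-zero conditioning events, to replace each natural propensity score by its identified analogue on the event where the weight product is nonzero, while on the complementary event condition~\ref{cond:positivity} forces the weight product to vanish. This reduces $\psi_1$ to $\bbE\big\{Y(\overline a_T)\prod_{t=1}^T g_t(\overline X_t(\overline a_{t-1}))\,g_t^\prime(\overline X_t(\overline a_{t-1}^\prime))\big\}$, where $g_t:=w_t\circ\pi_t$ and $g_t^\prime:=w_t^\prime\circ\pi_t^\prime$ are now fixed, identified functions.

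The third move is the heart of the argument: I would show $\psi_1$ equals \eqref{eq:id-1} by a backward induction over $t=T,T-1,\dots,1$, peeling one timepoint at a time and converting counterfactual conditional laws into observed-data conditional laws. At step $t$, conditioning on the paired counterfactual histories $\big(\overline X_t(\overline a_{t-1}),\overline X_t(\overline a_{t-1}^\prime)\big)$, I would use consistency together with strong sequential randomization (Assumption~\ref{asmp:strong-seq-exch}: $U_{A,t}\ind\{\underline U_{X,t+1},\underline U_{A,t+1},U_Y\}\mid H_t$) to (a) equate the conditional mean of the downstream factor given the counterfactual history with the conditional mean given the matched observed event $\{\overline X_t=\overline x_t,\overline A_t=\overline a_t\}$, and (b) rewrite the conditional law of the next-period counterfactual covariate as the observed conditional law $d\bbP(x_{t+1}\mid\overline A_t=\overline a_t,\overline x_t)$. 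Iterating down to $t=1$, where $X_1(\overline a_0)=X_1(\overline a_0^\prime)=X_1$, collapses $\psi_1$ to the iterated integral in \eqref{eq:id-1}; the mirror recursion, peeling along the primed history for the outcome factor, yields \eqref{eq:id-2} for $\psi_2$.

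The hard part is that the integrand of $\psi_1$ depends \emph{simultaneously} on two counterfactual covariate processes that share exogenous noise and are therefore dependent: $\overline X_t(\overline a_{t-1})$ enters through $Y(\overline a_T)$ and the unprimed weights, whereas $\overline X_t(\overline a_{t-1}^\prime)$ enters through the primed weights. The recursion must carry both histories along, and strong sequential randomization---strictly stronger than the sequential exchangeability needed for ordinary longitudinal effects---is exactly what is needed at each step to decouple the natural treatment value from the downstream variables of \emph{both} worlds and to reconcile the primed weights with the unprimed g-formula measure. Tracking precisely which conditional independences are in force at each step, and checking that the positivity provisos of Lemma~\ref{lem:id-prop-scores} hold on the support where the weights are nonzero, is where the effort concentrates; once that recursion step is justified, the remainder is routine use of the tower property.
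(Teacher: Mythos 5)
Your overall strategy---split the estimand by linearity, identify the natural propensity scores via Lemma~\ref{lem:id-prop-scores} together with condition~\ref{cond:positivity} and the zero convention, then a sequential iterated-expectations (g-computation) argument using consistency and Assumption~\ref{asmp:strong-seq-exch}---is the same as the paper's. However, there is a genuine gap at exactly the step you flag as the heart of the argument. You propose to condition on the paired cross-world histories $\bigl(\overline X_t(\overline a_{t-1}), \overline X_t(\overline a_{t-1}^\prime)\bigr)$ and to use consistency plus strong sequential randomization to equate conditional means given that pair with conditional means given the matched observed event $\{\overline X_t = \overline x_t, \overline A_t = \overline a_t\}$. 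Assumption~\ref{asmp:strong-seq-exch} does not license this: it is a statement about $U_{A,t}$ conditional on the \emph{observed} history $H_t$, and it places no restriction on the joint law of $\{U_{X,s}\}$ and $U_Y$. Because $\overline X_t(\overline a_{t-1}^\prime)$ is built from the same covariate noise as the unprimed history, conditioning on it in addition to the unprimed history can carry extra information about $U_Y$ and about future covariate noise, so the conditional mean of the downstream factor given the pair need not equal the observed-data regression; nothing in the stated assumptions delivers your step (a). Relatedly, your sketch never shows how the primed weight, a function of the primed history, ends up evaluated at the \emph{same} integration variable $\overline x_t$ as the unprimed weight under the measure $\prod_t d\bbP(x_t \mid \overline A_{t-1} = \overline a_{t-1}, \overline x_{t-1})$ in \eqref{eq:id-1}; you assert that strong sequential randomization ``reconciles'' the two worlds, but that reconciliation is precisely what must be proved.

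The paper's proof never conditions on a cross-world pair. It runs the induction \emph{forward} from $t=1$, at each stage conditioning only on observed quantities ($X_1$, then $X_2, A_1=a_1, X_1$, and so on). Condition~\ref{cond:positivity}, applied to the weights already peeled off at earlier timepoints, guarantees that either the inner expectation may be restricted to the event where the earlier $\pi_s$ and $\pi_s^\prime$ are strictly positive or the whole term is zero; on that event Lemma~\ref{lem:id-prop-scores} identifies the current timepoint's natural propensity scores, and from that point both weight functions are carried along the single observed covariate path that the g-formula measure integrates over. The forward order is what makes condition~\ref{cond:positivity} and Lemma~\ref{lem:id-prop-scores} interlock; your plan to invoke the lemma timepoint by timepoint up front needs the same event-restriction argument (which you gesture at), and your backward recursion could be repaired only by replacing the paired-history conditioning with conditioning on observed histories and by making explicit, rather than asserting, the step in which $w_t^\prime$ becomes a function of the same path as $w_t$. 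As written, that step would fail under the stated assumptions.
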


This result establishes when the cumulative cross-world weighted effect is identifiable as a difference in g-formula functionals \citep{robins1986new}. To our knowledge, this is a new type of identification result: it blends the traditional g-formula with cross-world weights $w_t \{ \pi_t(\overline x_t) \} w_t^\prime\{ \pi_t^\prime(\overline x_t) \}$, which behave like the intervention propensity scores that arise with dynamic stochastic interventions or S-LMTPs (eg., \citet{kennedy2019nonparametric}, Lemma 1). However, these weights don't define an intervention, but rather define a weighted population that facilitates cross-world comparisons.

\bigskip

We now discuss the assumptions required for identification. In addition to the standard NPSEM assumption, we also impose strong sequential randomization (Assumption~\ref{asmp:strong-seq-exch}) to identify cross-world propensity scores in Lemma~\ref{lem:id-prop-scores}. This is a stronger assumption than typically necessary for identification. It allows us to relate the counterfactual and cross-world conditioning sets for one of the propensity scores to the observed conditioning sets.

\bigskip

Notably, Theorem~\ref{thm:id-cumulative} does not require a positivity assumption.  Instead, positivity is guaranteed by condition~\ref{cond:positivity}.  It asserts that the weight product $w_t(\pi_t) w_t^\prime(\pi_t^\prime)$ is zero whenever either propensity score used to construct the weights is zero. This is enforceable by the design of the weights. From Table~\ref{tab:weights}, the overlap weights and trimming weights satisfy this directly, and the smooth trimming weights can satisfy this constraint with specific choices of smoothing function. 

\bigskip

Condition~\ref{cond:positivity} is crucial because it facilitates identification of the natural propensity scores within the g-formula result in Theorem~\ref{thm:id-cumulative}. Informally, presuming one has successfully identified everything before timepoint $t$, the positivity assumptions in Lemma~\ref{lem:id-prop-scores} hold within the innermost expectation because otherwise the weights from prior timepoints ensure the overall g-formula is zero. Therefore, one can identify the natural propensity scores at timepoint $t$ using Lemma~\ref{lem:id-prop-scores}. One can apply this recursively to identify the full g-formula, as in Theorem~\ref{thm:id-cumulative}, but without requiring the positivity assumptions in Lemma~\ref{lem:id-prop-scores}.   

\subsection{Common support of time-varying covariates}

A critical consideration concerns the overlap of covariate distributions across treatment regimes. When the conditional laws $X_t \mid \overline a_{t-1},\overline x_{t-1}$ and $X_t \mid \overline a_{t-1}^{\prime},\overline x_{t-1}$ have disjoint support, the identified functionals in \eqref{eq:id-1} and \eqref{eq:id-2} vanish, yielding $\psi(\overline a_T,\overline  a_T^{\prime})=0$ even when the true potential outcome contrast is non-zero. This occurs because, by convention, the propensity scores are set to zero (and consequently their weights to zero) whenever the conditioning event has probability zero, effectively trimming away paths with no common support. Therefore, an overlap condition is required not for identification itself, but rather for $\psi(\overline a_T, \overline a_T^\prime)$ to be an informative functional about the mechanistic difference in potential outcomes. To formalize this requirement, we first define the following covariate density ratio:
\[
\rho_t(\overline X_t) = \frac{d\bbP(X_t \mid \overline A_{t-1} = \overline a_{t-1}, \overline X_{t-1})}{d\bbP(X_t \mid \overline A_{t-1} = \overline a_{t-1}^\prime, \overline X_{t-1})}.
\]
The following result establishes when the weighted functional $\psi(\overline a_T, \overline a_T^\prime)$ faithfully represents differences in potential outcomes.
\begin{proposition}[Informativeness of weighted effects] \label{prop:informativeness}
    Suppose the NPSEM and Assumption~\ref{asmp:strong-seq-exch} hold, and condition~\ref{cond:positivity} is satisfied. Then:
    \begin{enumerate}
        \item If $\bbP \{ \rho_t(\overline X_t) \in (0, \infty) \} = 0$ for some $t$, then $\psi(\overline a_T, \overline a_T^\prime) = 0$ regardless of the true potential outcome contrast.
        \item If $\bbP \{\rho_t(\overline X_t) \in (0, \infty) \} > 0$ for all $t$ and $\bbP \{ \pi_s(\overline X_s) > 0, \pi_s^\prime(\overline X_s) > 0 \text{ for all } s \leq t \mid \rho_s(\overline X_s) \in (0, \infty) \text{ for all } s \leq t \} > 0$ for all $t$, then 
        \begin{align*}
            &\bbP \{ Y(\overline a_T) > Y(\overline a_T^\prime) \} = 1 \implies \psi(\overline a_T, \overline a_T^\prime) > 0 \text{ and}  \\
            &\bbP \{ Y(\overline a_T) < Y(\overline a_T^\prime) \} =1  \implies \psi(\overline a_T, \overline a_T^\prime) < 0.
        \end{align*}
    \end{enumerate}
\end{proposition}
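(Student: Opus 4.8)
The plan is to read part~1 off the identified (weighted $g$-formula) representation of $\psi$ from Theorem~\ref{thm:id-cumulative} and part~2 off the original counterfactual representation \eqref{eq:cumulative-weight}; the bridge between the two is the observation that the cross-world weight product
\[
V(\overline x_T):=\prod_{t=1}^T w_t\{\pi_t(\overline x_t)\}\,w_t^\prime\{\pi_t^\prime(\overline x_t)\}
\]
is, up to null sets, supported exactly on the common-support region. Indeed, by Condition~\ref{cond:positivity}, $V(\overline x_T)>0$ forces $\pi_s(\overline x_s)>0$ and $\pi_s^\prime(\overline x_s)>0$ for all $s$; by the convention that a propensity score vanishes whenever its conditioning event has probability zero, $\pi_s(\overline x_s)>0$ requires $d\bbP(x_s\mid \overline A_{s-1}=\overline a_{s-1},\overline x_{s-1})>0$ and $\pi_s^\prime(\overline x_s)>0$ requires $d\bbP(x_s\mid \overline A_{s-1}=\overline a_{s-1}^\prime,\overline x_{s-1})>0$, i.e.\ $\rho_s(\overline x_s)\in(0,\infty)$. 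Hence $\{V>0\}\subseteq\bigcap_s\{\rho_s\in(0,\infty)\}$ up to null sets; moreover on $\{V>0\}$ every conditioning event $\{\overline A_{s-1}=\overline a_{s-1},\overline X_{s-1}=\overline x_{s-1}\}$ and its primed analogue carries positive probability, so there both $g$-formula integrating measures are dominated by the observed law of $\overline X_T$.

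For part~1, fix $t$ with $\bbP\{\rho_t(\overline X_t)\in(0,\infty)\}=0$. Then $\{\rho_t\in(0,\infty)\}$ is null under the observed law of $\overline X_t$, so by the domination just noted it is also null under $\prod_s d\bbP(x_s\mid\overline A_{s-1}=\overline a_{s-1},\overline x_{s-1})$ and under $\prod_s d\bbP(x_s\mid\overline A_{s-1}=\overline a_{s-1}^\prime,\overline x_{s-1})$, restricted to the region where the conditioning events carry mass. Since $\{V>0\}$ lies in that region, the integrands of both \eqref{eq:id-1} and \eqref{eq:id-2} vanish outside a null set, whence $\psi(\overline a_T,\overline a_T^\prime)=0$ irrespective of $\bbE(Y\mid\overline a_T,\cdot)$ and $\bbE(Y\mid\overline a_T^\prime,\cdot)$, and in particular irrespective of the true contrast $Y(\overline a_T)-Y(\overline a_T^\prime)$.

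For part~2, write $\psi(\overline a_T,\overline a_T^\prime)=\bbE\big[\{Y(\overline a_T)-Y(\overline a_T^\prime)\}\,W\big]$ with $W:=\prod_t w_t[p_t\{\overline X_t(\overline a_{t-1})\}]\,w_t^\prime[p_t^\prime\{\overline X_t(\overline a_{t-1}^\prime)\}]\ge 0$. If $\bbP\{Y(\overline a_T)>Y(\overline a_T^\prime)\}=1$, the integrand is $\ge 0$ a.s.\ and strictly positive on $\{W>0\}$, so $\psi\ge 0$ with $\psi>0$ iff $\bbP(W>0)>0$; the case $\bbP\{Y(\overline a_T)<Y(\overline a_T^\prime)\}=1$ is the mirror image. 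It remains to deduce $\bbP(W>0)>0$ from the two displayed conditions, which I would do by induction on $t$: Lemma~\ref{lem:id-prop-scores} lets one identify each natural propensity $p_s\{\overline X_s(\overline a_{s-1})\}$ with $\pi_s$ evaluated along the $\overline a$-counterfactual covariate path (and $p_s^\prime$ with $\pi_s^\prime$ along the $\overline a^\prime$-path) once positivity has survived up to $s-1$; the condition $\bbP\{\rho_t(\overline X_t)\in(0,\infty)\}>0$ for all $t$ keeps the two covariate laws from being mutually singular, so that the weighted region is not automatically trimmed to a null set as in part~1; and the conditional-positivity statement supplies a positive-probability set of histories, staying inside the overlap, on which every $\pi_s,\pi_s^\prime$, and hence (for the weight classes with $w(0)=0$, e.g.\ overlap weights) every factor of $W$, is strictly positive.

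The main obstacle is the induction in part~2. The conditions are phrased in terms of the single, observed covariate process $\overline X_t$ and the identified functionals $\pi_t,\pi_t^\prime,\rho_t$, whereas $W$ couples a functional of the $\overline a$-counterfactual process $\overline X_T(\overline a_{T-1})$ with a functional of the $\overline a^\prime$-counterfactual process $\overline X_T(\overline a_{T-1}^\prime)$---two processes whose joint law is not identified and which interact only through shared exogenous variables. Carrying positivity forward along both processes simultaneously, while tracking which conditioning events carry positive probability so that Lemma~\ref{lem:id-prop-scores} and Assumption~\ref{asmp:strong-seq-exch} may legitimately be invoked at each stage, is the delicate step; part~1, by contrast, reduces to the comparatively routine absolute-continuity transfer between the observed covariate law and the treatment-conditional $g$-formula measures on the region where the relevant treatment histories carry positive probability.
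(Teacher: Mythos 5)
Your part~1 is essentially right and is the natural argument: by condition~\ref{cond:positivity} the weight product can be nonzero only where every $\pi_s$ and $\pi_s^\prime$ is nonzero, the zero-by-convention rule then forces both conditional covariate densities along the two treatment histories to be positive there, hence $\rho_s\in(0,\infty)$ for all $s$, and on that region the $g$-formula measures are dominated by the observed law of $\overline X_t$ (the joint density of $\{\overline X_t,\overline A_{t-1}=\overline a_{t-1}\}$ is at most the marginal density of $\overline X_t$, and dividing by the strictly positive $\prod_{s<t}\pi_s$ preserves null sets), so the hypothesis $\bbP\{\rho_t(\overline X_t)\in(0,\infty)\}=0$ kills both integrals in \eqref{eq:id-1}--\eqref{eq:id-2}.

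Part~2, however, has a genuine gap, and it is exactly the step you flag but do not carry out. Your reduction is fine: under a.s.\ strict ordering and nonnegative weights, $\psi>0$ iff $\bbP(W>0)>0$ (and even this needs the unstated property $w_t(p)>0$ whenever $p>0$, which hard trimming violates). But the displayed hypotheses constrain only \emph{single-world} objects --- the observed law of $\overline X_t$ and the identified functionals $\pi_t,\pi_t^\prime,\rho_t$ evaluated at the observed history --- i.e.\ only the marginal behaviour of each counterfactual covariate process, whereas $\{W>0\}$ is a \emph{joint cross-world} event coupling $\overline X_T(\overline a_{T-1})$ and $\overline X_T(\overline a_{T-1}^\prime)$ through the shared exogenous variables $U_{X,t}$. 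No induction on $t$ using Lemma~\ref{lem:id-prop-scores} plus non-singularity of the two covariate laws can bridge this, because the joint coupling is simply not pinned down by those margins. Concretely, take $T=2$, $X_1$ degenerate, $A_1\sim\mathrm{Bern}(1/2)$ independent of $U\sim\mathrm{Unif}(0,1)$, $X_2(a_1)=U$ and $X_2(a_1^\prime)=1-U$, and a time-2 assignment mechanism with $\pi_2(x_2)>0$ and $\pi_2^\prime(x_2)>0$ exactly on $\{x_2<1/2\}$: both counterfactual margins of $X_2$ are $\mathrm{Unif}(0,1)$, so $\rho_2\equiv 1$ and the conditional-positivity hypothesis holds with probability $1/2$, yet $\pi_2\{X_2(a_1)\}>0$ and $\pi_2^\prime\{X_2(a_1^\prime)\}>0$ require $U<1/2$ and $U>1/2$ simultaneously, so $W=0$ almost surely and $\psi=0$. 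So the argument as sketched cannot close: you need either an additional cross-world support/coupling assumption, or an argument conducted entirely on the identified representation in which both weights are evaluated at the \emph{same} path $\overline x_t$ (the route the paper's identification result suggests), rather than on the counterfactual representation where the two worlds must be aligned.
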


This result shows that when covariate supports are completely disjoint, the weighted functional is uninformative because all paths receive zero weight in the identified functional. However, even partial overlap enables the functional to detect true differences. This result is reminiscent of the notion of comparability in the multi-valued treatment literature \citep{mcclean2024fair}. To ensure $\psi(\overline a_T, \overline a_T^\prime)$ is informative about mechanistic differences, we assume:
\begin{assumption}[Partial common support] \label{asmp:partial-common-support}
    For every $t$, \( \bbP\bigl\{ \rho_t(\overline X_t) \in (0, \infty)\bigr\} > 0. \)
\end{assumption}
Assumption~\ref{asmp:partial-common-support} requires that the covariate distributions have common support on a set of positive probability. This is considerably weaker than requiring full common support almost surely---it allows for subjects who would experience one treatment regime but not the other, as long as some subjects have overlapping covariate distributions under both regimes. Crucially, this assumption can be examined empirically by examining estimates for $\rho_t$. We demonstrate this in our data analysis in Section~\ref{sec:data-analysis}.
\bigskip

It is important to emphasize that these covariate overlap issues are conceptually distinct from positivity violations. To illustrate this, consider two simple data-generating processes, where we want to calculate the weighted difference in means of $Y(1,1)$ to $Y(0,0)$:
\begin{enumerate}
    \item Suppose $X_1 \sim \text{Unif}(0,1)$, $A_1 \sim \text{Bern}(0.5)$, $X_2 = A_1$, and $A_2 \sim \text{Bern}(0.5)$. This process exhibits no positivity violations, yet the conditional distributions $X_2 \mid A_1 = 1, X_1$ and $X_2 \mid A_1 = 0, X_1$ have disjoint support.
    \item Conversely, suppose $X_1 \sim \text{Unif}(0,1), A_1 \mid X_1 \sim \text{Bern} \bigl\{ \one(X_1>0.2)\bigr\}, X_2 \sim \text{Unif}(0,1),\, A_2 \mid X_2, A_1, X_1 \sim \text{Bern} \bigl\{ \one(X_1 X_2 - A_1 >0.2 ) \bigr\}$. Here $X_2\mid A_1=1$ and $X_2\mid A_1=0$ are both $\text{Unif}(0,1)$, so $\rho_2 \equiv 1$ (full covariate overlap). However, $A_1=1$ occurs only when $X_1>0.2$, so $\pi_1(X_1) \equiv \bbP(A_1 = 1 \mid X_1) =0$ on the set $\{ X_1 \leq 0.2 \}$.  Likewise $\pi_2(X_1,X_2) \equiv \bbP(A_2 = 1 \mid \overline X_2, A_1 = 1) = 0$ on the event $\{ X_1X_2 \leq 1.2 \}$. Similarly, $\pi_1^\prime(X_1) \equiv \bbP(A_1 = 0 \mid X_1) = 0$ on the set $\{ X_1 > 0.2\}$ and $\pi_2^\prime (X_1, X_2) \equiv \bbP(A_2 = 0 \mid \overline X_2, A_1 = 0) = 0$ on the set $\{ X_1 X_2 > 0.2\}$.   Thus the positivity assumptions fail on regions of positivity probability even though covariate overlap is perfect.
\end{enumerate}
These examples demonstrate that covariate overlap concerns arise independently of treatment assignment mechanisms.

\section{Estimation} \label{sec:estimation}

In this section, we outline a doubly robust-style estimator for the identified cumulative cross-world weighted effect. For simplicity, we focus on the first component of equation~\eqref{eq:id-1} in Theorem~\ref{thm:id-cumulative}, as our results immediately extend to the full contrast by linearity. We also focus on weight functions that are known smooth functions of the underlying propensity scores; that is, $w_t$ and $w_t^\prime$ are twice differentiable with non-zero and bounded derivatives. When the weight functions are unknown---for example, if one wanted to decide a trimming threshold or smooth trimming parameter data-adaptively---estimation and inference become more complex \citep{khan2022doubly}. We focus on smooth weight functions because this allows for the derivation of $\sqrt{n}$-consistent and asymptotically normal estimators under nonparametric assumptions by leveraging nonparametric efficiency theory and efficient influence functions \citep{bickel1993efficient}. Overlap weights and smooth trimming weights are appropriately smooth. By contrast, when the weights are non-smooth, such as with trimming weights, the weighted effect is not pathwise differentiable and the performance of estimators is dictated by the behavior of propensity score estimators within the trimming indicator. However, if trimming weights are desired, two options are available. First, they can be replaced by smooth approximations. Second, one can target the data-dependent estimand with estimated weights rather than with true weights \citep{van2007causal}.

\subsection{Notation}

To facilitate exposition, we refine our notation. First, let $\psi(\overline a_T)$ denote the first half of the identified cumulative cross-world weighted effect; i.e., 
\begin{equation} \label{eq:first-half}
    \psi(\overline a_T) = \int_{\overline{\mathcal{X}}_T} \bbE ( Y \mid \overline a_T, \overline x_T) \prod_{t=1}^T w_t \{\pi_t(\overline x_t) \} w_t^\prime\{ \pi_t^\prime(\overline x_t) \} d\bbP(x_t \mid \overline A_{t-1} = \overline a_{t-1}, \overline x_{t-1}).
\end{equation}
Let $m_{T+1} = Y$ and $w_{T+1} = w_{T+1}^\prime = 1$, and recursively define
\begin{equation} \label{eq:seq-reg}
    m_t(\overline X_t) = \bbE \left[ m_{t+1}(\overline X_{t+1}) w_{t+1}\{ \pi_{t+1}(\overline X_{t+1}) \} w_{t+1}^\prime \{ \pi_{t+1}^\prime( \overline X_{t+1}) \}  \,\, \bigg| \,\, \overline A_t = \overline a_t, \overline X_t \right]
\end{equation}
as the sequential regression function for $t < T+1$.  This recursive structure captures the dynamic nature of the problem: each $m_t$ represents the expected value of future weighted outcomes given the treatment and covariate history up to time $t$, weighted by the cross-world propensity scores. This is reminiscent of the weighted sequential regression function in \citet{schomaker2024causal}, but our construction weights on two sets of propensity scores simultaneously. Note that $m_0 = \psi(\overline a_T)$.  

\bigskip

Finally, let $r_t(A_t, \overline X_t) = \frac{\one(A_t = a_t) w_t\{ \pi_t(\overline X_t) \} w_t^\prime\{ \pi_t^\prime(\overline X_t) \}}{\pi_t(\overline X_t)}$ denote the inverse propensity score weight, and $r_t^\prime(A_t, \overline X_t) = \frac{\one(A_t = a_t^\prime) w_t\{ \pi_t(\overline X_t) \} w_t^\prime \{ \pi_t^\prime(\overline X_t) \}}{\pi_t^\prime (\overline X_t)}$ denote an augmented weight that incorporates the cross-world regime.

\subsection{Efficient influence function}

The efficient influence function is an important concept from semiparametric efficiency theory \citep{van2000asymptotic, tsiatis2006semiparametric}. It can be thought of as the first derivative in the von Mises expansion of the functional $\psi(\overline a_T)$ \citep{von1947asymptotic}. Practically, it is useful because it characterizes one way to debias standard estimators to attain doubly robust-style estimators with improved bias properties under model misspecification \citep{kennedy2024semiparametric}. The estimator we propose is based on the efficient influence function of $\psi(\overline a_T)$, whose form we derive in the following result. 
\begin{lemma} \label{lem:eif}
    Suppose Assumption~\ref{asmp:partial-common-support} holds, and 
    \begin{itemize}
        \item $w_t\{ \pi_t(\overline X_t) \}$ and $w_t^\prime\{ \pi_t^\prime(\overline X_t) \}$ are constructed such that $r_t(A_t; \overline X_t)$ and $r_t^\prime(A_t; \overline X_t)$ are uniformly bounded,  
        \item there exists $C > 0$ such that $\bbP \{ m_t(\overline X_t) < C \} = 1$ for all $t \leq T$, and 
        \item there exists $C^\prime > 0, \delta > 0$ such that $\sup_{t \leq T} \bbE \{ \rho_t^{2 + \delta}(\overline X_t) \mid \rho_t(\overline X_t) < \infty \} < C^\prime$.
    \end{itemize} 
    Then the uncentered efficient influence function of $\psi(\overline a_T)$ in a nonparametric model is
    \begin{align*}
        \varphi(Z) &= \varphi_m(Z) + \varphi_w(Z) \text{ where} \\
        \varphi_m(Z) &= m_1(X_1) w_1\{ \pi_1(X_1) \} w_1^\prime\{ \pi_1^\prime(X_1) \} \\
        &+ \sum_{t=1}^{T} \left\{ \prod_{s=1}^{t} r_s(A_s, \overline X_s) \right\} \left[ m_{t+1}(\overline X_{t+1})w_{t+1}\{ \pi_{t+1}(\overline X_{t+1}) \} w_{t+1}^\prime \{ \pi_{t+1}(\overline X_{t+1}) \} - m_t(\overline X_t) \right] \text{ and } \\
        \varphi_w(Z) &= \sum_{t=1}^{T} \left\{ \prod_{s=1}^{t-1} r_s(A_s, \overline X_s) \right\}  m_t(\overline X_t)\phi_t(A_t,\overline X_t)  w_t^\prime\{ \pi_t^\prime(\overline X_t) \}  \\ 
        &+\sum_{t=1}^T \left\{ \prod_{s=1}^{t-1} r_s^\prime(A_s, \overline X_s) \rho_s(\overline X_s) \right\} m_t(\overline X_t) w_t\{ \pi_t(\overline X_t) \} \rho_t(\overline X_t) \phi_t^\prime(A_t, \overline X_t),
    \end{align*}
    where
    \begin{align*}
        \phi_t(A_t, \overline X_t) &= \dot w_t \{ \pi_t(\overline X_t) \} \{ \one(A_t = a_t) - \pi_t(\overline X_t) \} \text{ and } \\
        \phi_t^\prime(A_t, \overline X_t) &= \dot w_t^\prime \{ \pi_t^\prime(\overline X_t) \} \{ \one(A_t = a_t^\prime) - \pi_t^\prime(\overline X_t) \},
    \end{align*}
    and $\dot w(x)$ denotes the first derivative of $w(x)$.
\end{lemma}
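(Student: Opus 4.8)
My plan is to derive the EIF by the standard "one-step" / pathwise-derivative route adapted to the recursive structure of $\psi(\overline a_T)$ given in \eqref{eq:first-half}--\eqref{eq:seq-reg}. The target functional has two kinds of dependence on $\bbP$: through the sequential regressions $m_t$ and the covariate transition laws $d\bbP(x_t\mid\overline A_{t-1}=\overline a_{t-1},\overline x_{t-1})$ (these generate $\varphi_m$), and through the propensity scores $\pi_t,\pi_t'$ that enter the smooth weights (these generate $\varphi_w$). I would fix a regular parametric submodel $\{\bbP_\epsilon\}$ with score $S(Z)$, compute $\frac{d}{d\epsilon}\psi_\epsilon(\overline a_T)\big|_{\epsilon=0}$ by differentiating through the nested integral, and then rewrite the derivative as $\bbE[\varphi(Z)S(Z)]$, reading off $\varphi$. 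Because the functional is a nested conditional expectation weighted by ratios, the cleanest bookkeeping is a telescoping argument: write $\psi(\overline a_T)=\bbE[m_1(X_1)w_1w_1']$, peel off one layer at a time using the tower property and the identity $\bbE[\one(A_t=a_t)\,g(\overline X_t)/\pi_t(\overline X_t)\mid\overline A_{t-1}=\overline a_{t-1},\overline X_{t-1}]=\bbE[g(\overline X_t)\mid\cdot]$, and collect the "correction" terms at each level.

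For the $\varphi_m$ part I would proceed exactly as in the derivation of the EIF for the longitudinal g-formula / sequential-regression estimand (cf. the S-LMTP literature cited, e.g. \citet{kennedy2019nonparametric, schomaker2024causal}): differentiating the transition law at level $t$ and the regression $m_t$ contributes a term $\{\prod_{s=1}^t r_s(A_s,\overline X_s)\}[m_{t+1}w_{t+1}w_{t+1}' - m_t]$, where the product of inverse-propensity weights $r_s$ arises from re-expressing the counterfactual integral against the observed data distribution. The boundary term $m_1(X_1)w_1\{\pi_1\}w_1'\{\pi_1'\}$ is the "plug-in" piece. The uniform boundedness of $r_t,r_t'$ and of $m_t$ (first two bullet hypotheses) is what guarantees these terms have finite variance and that the interchange of differentiation and integration is legitimate.

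For $\varphi_w$ I would treat the weights' dependence on $\pi_t$ (resp.\ $\pi_t'$) separately. Since $w_t$ is twice differentiable with bounded derivative, differentiating $w_t\{\pi_{t,\epsilon}(\overline X_t)\}$ in $\epsilon$ produces $\dot w_t\{\pi_t(\overline X_t)\}\,\frac{d}{d\epsilon}\pi_{t,\epsilon}(\overline X_t)$, and the pathwise derivative of a conditional probability $\pi_t$ is, by the standard logistic-submodel computation, the residual $\one(A_t=a_t)-\pi_t(\overline X_t)$ paired against the score — this gives $\phi_t(A_t,\overline X_t)$. Propagating this perturbation out through the nested integral attaches the usual product of preceding weights $\prod_{s=1}^{t-1}r_s$ and the factor $m_t(\overline X_t)w_t'\{\pi_t'(\overline X_t)\}$, yielding the first sum in $\varphi_w$. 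The second sum, coming from perturbing $\pi_t'$, is the delicate one: because $\pi_t'$ conditions on the \emph{other} regime $\overline a_{t-1}'$, re-anchoring its perturbation to the observed data requires a change of measure from the $\overline a'$-world transitions to the $\overline a$-world transitions, which is exactly what the density ratios $\rho_s(\overline X_s)$ accomplish — hence the appearance of $\{\prod_{s=1}^{t-1}r_s'\rho_s\}$ and the extra $\rho_t(\overline X_t)$ factor. Here the third bullet hypothesis ($L^{2+\delta}$ control of $\rho_t$ conditional on $\rho_t<\infty$), together with Assumption~\ref{asmp:partial-common-support} ensuring $\rho_t\in(0,\infty)$ on a positive-probability set, is what keeps these $\rho$-weighted terms square-integrable so that $\varphi_w$ is a bona fide element of the tangent space.

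The main obstacle, and the step I would spend the most care on, is precisely this cross-world re-anchoring in the second sum of $\varphi_w$: one must verify that perturbing $\pi_t'$ — an object defined on the $\overline a'$ counterfactual stream — and then pushing that perturbation through the remainder of the functional (which is an integral against the $\overline a$-stream transitions, per \eqref{eq:first-half}) produces exactly the stated $\rho$-weighted correction, with the right index ranges on the products and no double-counting against the $\varphi_m$ terms. Getting the telescoping to close cleanly — so that the sum of all corrections plus the plug-in term equals a single mean-zero (after centering) influence function orthogonal to nuisance tangent directions — is the crux; everything else is the now-routine bounded-differentiation and tower-property bookkeeping. I would finish by checking the two sanity conditions: that $\bbE[\varphi(Z)]=\psi(\overline a_T)$ (uncentered form) and that under the special weight choices the formula reduces correctly (e.g.\ $w_t\equiv w_t'\equiv 1$ kills $\varphi_w$ and recovers the known g-formula EIF).
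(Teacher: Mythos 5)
Your plan is correct in outline, but it takes a genuinely different route from the paper. You propose to derive the candidate by direct pathwise differentiation along parametric submodels: differentiate the transition laws and sequential regressions to get $\varphi_m$, and differentiate the weights via $\dot w_t\{\pi_t\}$ times the score residual $\one(A_t=a_t)-\pi_t$ to get $\varphi_w$, with the cross-world sum re-anchored by the density ratios $\rho_s$. The paper instead guesses exactly the stated $\varphi$ and verifies it: helper Lemmas~\ref{lem:eif-helper-1}--\ref{lem:eif-helper-4} compute $\bbE\{\widehat\varphi(Z)\}-\psi(\overline a_T)$ for arbitrary nuisance estimates and show, via second-order Taylor expansions of the weight functions, that the error is a sum of products of nuisance errors; this second-order von Mises remainder, combined with \citet[Lemma 2]{kennedy2023semiparametric} and a finite-variance check, establishes that $\varphi$ is the efficient influence function in the nonparametric model. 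What each approach buys: yours derives the form from first principles rather than verifying a guess, but its crux---pushing the perturbation of $\pi_t^\prime$, which lives on the $\overline a^\prime$-conditioned stream, through the $\overline a$-world nested integral and closing the telescope without double-counting---is precisely the bookkeeping you flag but do not execute, and you would additionally need to justify differentiation under the nested integrals and coverage of the nonparametric tangent space. The paper's route avoids formalizing submodels altogether, and its bias decomposition is reused verbatim in the proof of Theorem~\ref{thm:convergence}, so the verification does double duty. One substantive point your sketch glosses over: finiteness of $\bbV\{\varphi(Z)\}$ for the $\rho$-weighted second sum of $\varphi_w$ is not immediate from the third bullet; the paper needs an iterated-expectations argument with conditional H\"{o}lder's inequality applied to the $2+\delta$ moment of $\rho_t$ (on the event $\rho_t<\infty$, which the weights enforce), and any completed version of your derivation would need the same argument to certify that the candidate is a valid, finite-variance gradient.
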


Before examining this result in detail, we discuss the assumptions. These assumptions ensure the efficient influence function is well-defined and has finite variance. First, the result assumes the partial common support assumption; otherwise $\psi(\overline a_T) = 0$ and the efficient influence function is also zero. When the partial common support assumption holds, the existence of the efficient influence function requires bounded variance of $\varphi(Z)$. To ensure this, Lemma~\ref{lem:eif} assumes that $w_t$ and $w_t^\prime$ are constructed so that $r_t$ and $r_t^\prime$ are uniformly bounded. This can be guaranteed through appropriate construction of the weights---overlap weights and versions of smooth trimming weights satisfy this condition. For example, $f(x) = 1 - \exp(-kx)$ for $k > 0$ satisfies this condition. Then, the result assumes that the sequential regression functions are uniformly bounded, which is a mild assumption. Finally, the result assumes that the $2+\delta$ moment of $\rho_t(\overline X_t)$ is bounded on the set where $\rho_t$ is finite. While this rules out heavy-tailed covariate densities, it still permits arbitrarily large density ratios.

\bigskip

The efficient influence function in Lemma~\ref{lem:eif} follows the typical general structure but with some important nuances. As usual, $\varphi(Z)$ consists of a plug-in estimator minus the true functional, plus weighted residual terms. The first component, $\varphi_m(Z)$, represents the efficient influence function that would arise if $\pi_t(\overline X_t)$ and $\pi_t^\prime(\overline X_t)$ were known and did not require estimation. The second component, $\varphi_w(Z)$, emerges from the necessity of estimating these quantities. This structure is typical of dynamic stochastic interventions and S-LMTPs where the intervention itself depends on unknown propensity scores.  

\bigskip

\noindent Beyond this general structure, two further aspects are worth emphasizing:
\begin{enumerate}
    \item \textbf{Dependence on the non-target regime.} Lemma~\ref{lem:eif} introduces a novel form of $\varphi_w(Z)$ that reflects the dependency on $w_t^\prime\{ \pi_t^\prime(\overline X_t) \}$ and $\pi_t^\prime(\overline X_t)$, which yields the second summand. This second summand is unusual because it addresses estimation with respect to the non-target regime, $\overline a_T^\prime$. Notice that both $r^\prime$ and $\phi_t^\prime$ include indicators for $\one(A_t = a_t^\prime)$, while $\rho_t(\overline X_t) = \frac{d\mathbb{P}(X_t \mid \overline A_{t-1} = \overline a_{t-1}, \overline X_{t-1})}{d\mathbb{P}(X_t \mid \overline A_{t-1} = \overline a_{t-1}^\prime, \overline X_{t-1})}$ transports from the non-target regime $\overline a_{t-1}^\prime$ to the target regime $\overline a_{t-1}$.
    \item \textbf{Deterministic and stochastic components.} The efficient influence function combines elements from both deterministic and stochastic interventions. This dual nature reflects the fact that while $w_t\{ \pi_t(\overline X_t) \} w_t^\prime\{ \pi_t^\prime(\overline X_t) \}$ act as a weight which may not equal zero or one---similar to the intervention propensity score in a stochastic intervention---the target effect corresponds to a deterministic intervention where $\overline A_T = \overline a_T$. Both the deterministic and stochastic characteristics appears in the weights $r_t(A_t, \overline X_t)$ and $r_t^\prime(A_t, \overline X_t)$. The deterministic component manifests through the indicator functions $\one(A_t = a_t)$, reminiscent of the classic doubly robust efficient influence function for the mean under a deterministic treatment regime \citep{bang2005doubly}. Meanwhile, the stochastic intervention aspect appears in $\tfrac{w_t\{ \pi_t(\overline X_t) \} w_t^\prime\{ \pi_t^\prime(\overline X_t) \}}{\pi_t(\overline X_t)}$ and $\tfrac{w_t\{ \pi_t(\overline X_t) \} w_t^\prime\{ \pi_t^\prime(\overline X_t) \}}{\pi_t^\prime(\overline X_t)}$.
\end{enumerate}

\subsection{Doubly robust-style estimator and convergence guarantees}

The efficient influence function inspires a doubly robust-style estimator. We assume access to $2n$ observations and use a sample split estimator, and let $\bbP_n$ denote the empirical mean over the estimation sample.  

\begin{algorithm}[Doubly robust-style estimator] \label{alg:dr-est} 
    Assume training and evaluation datasets of $n$ observations. 
    \begin{enumerate}
        \item For all timepoints, regress $A_t$ on $H_t = \{ \overline A_{t-1}, \overline X_t \}$ in the training data and obtain propensity score models; using these models, compute the weights $w_t\{ \widehat \pi_t(\overline X_t) \}$ and $w_t^\prime \{ \widehat \pi_t^\prime(\overline X_t) \}$, ratios $\widehat r_t(A_t, \overline X_t)$ and $\widehat r_t^\prime(A_t, \overline X_t)$, and efficient influence functions $\widehat \phi_t(A_t, \overline X_t)$ and $\widehat \phi_t^\prime(A_t, \overline X_t)$ on all samples and timepoints.
        \item For all timepoints, in the training sample construct an estimator for the covariate density ratio $\rho_t$ and compute $\widehat \rho_t(\overline X_t)$ in the estimation sample.
        \item For $t=T$ to $t=1$: 
        \begin{enumerate}
            \item 
            \begin{enumerate}
                \item If $t = T$, then $\widehat P_{T+1}(\overline X_{t+1}) = Y$. Otherwise, pseudo-outcome $\widehat P_{t+1}(\overline X_{t+1})$ is available from the previous step in loop (see step 3(b) below).
                \item In the training data, regress $\widehat P_{t+1}(\overline X_{t+1})$ against $A_t, H_t$ and plug in $\overline A_t = \overline a_t$ to obtain the sequential regression model $m_t(\overline X_t)$.
            \end{enumerate}
            \item Across the full data, compute pseudo-outcomes \\ $\widehat P_t(\overline X_t) = m_t(\overline X_t) w_t\{ \pi_t(\overline X_t) \} w_t^\prime\{ \pi_t^\prime(\overline X_t) \}$ to use in the next step.
        \end{enumerate}
        \item In the evaluation data only, compute the uncentered efficient function $\widehat \varphi(Z)$ by plugging nuisance estimates into $\varphi(Z)$ in Lemma~\ref{lem:eif}.
    \end{enumerate}
    Finally, output the point estimate and variance estimate
    \[
    \widehat \psi(\overline a_T) := \bbP_n \{ \widehat \varphi(Z) \} \text{ and } \widehat \sigma^2 := \bbP_n \left[ \{ \widehat \varphi(Z) - \widehat \psi(\overline a_T) \}^2 \right].
    \]
\end{algorithm}

Several features of this algorithm merit emphasis. First, the backward iteration in step 3 mirrors the sequential nature of the problem; this recursive structure is essential for capturing the dynamic dependencies inherent in longitudinal data. Second, the algorithm requires estimation of both propensity scores for the target and non-target regimes and the covariate density ratio $\rho_t$, reflecting the dependence on both regimes. We discuss estimating the covariate density ratio further in Section~\ref{subsec:ratio-estimation}, below. Finally, sample-splitting ensures that the same data are not used for both nuisance parameter estimation and the final estimator, which allows for asymptotic convergence guarantees without relying on Donsker or other conditions on the nuisance functions and their estimators (see, e.g., \citet{kennedy2024semiparametric}). To retain full-sample efficiency, one can cycle the folds, repeat the estimator, and average.  One can also construct an estimator based on more splits; five and ten folds are common.

\bigskip

Next, we establish $\sqrt{n}$-consistency and asymptotic normality under nonparametric conditions on the nuisance estimators.  
\begin{theorem} \label{thm:convergence}
    Under the setup of Lemma~\ref{lem:eif}, let $\widehat \psi(\overline a_T)$ and $\widehat \sigma^2$ denote a point estimate and variance estimate from Algorithm~\ref{alg:dr-est}, and let \\ $\widetilde m_t(\overline X_t) = \bbE \left[  \widehat m_{t+1}(\overline X_{t+1}) w_{t+1}\{ \widehat \pi_{t+1} (\overline X_{t+1}) \} w_{t+1}^\prime \{ \widehat \pi_{t+1}^\prime (\overline X_{t+1}) \} \mid \overline A_t = \overline a_t, \overline X_t \right]$ denote the sequential regression with estimated pseudo-outcome. Suppose
    \begin{enumerate}
        \item there exists $C < \infty$ such that $\bbP \big\{ \widehat m_t(\overline X_t) \leq C \big\} = 1$ for all $t \leq T$, \label{cond:bounded-seq-reg}     
        \item there exist $C^\prime > 0, \delta > 0$ such that $\sup_{t \leq T} \bbE \{ \widehat \rho^{2+\delta}(\overline X_t) \mid \widehat \rho_t(\overline X_t) < \infty \} < C^{\prime}$, \label{cond:bounded-moment-est}
        \item $\| \widehat \varphi - \varphi \| = o_\bbP(1)$, and \label{cond:consistent-eif}
        \item $\| \widehat \pi_t - \pi_t \| = o_\bbP(n^{-1/4}), \| \widehat \pi_t^\prime - \pi_t^\prime \| = o_\bbP(n^{-1/4}),   \| \widehat \pi_t^\prime - \pi_t^\prime \|_4 = o_\bbP(n^{-1/4}), \| \widehat \rho_t - \rho_t \|_4 = o_\bbP(n^{-1/4})$, and $\| \widehat m_t - \widetilde m_t \| = o_\bbP(n^{-1/4})$ for all $t \leq T$. \label{cond:bias}
    \end{enumerate}    
    Then,
    \[
    \sqrt{\frac{n}{\widehat \sigma^2}} \left\{ \widehat \psi(\overline a_T) - \psi(\overline a_T) \right\} \indist N \left( 0, 1 \right).
    \]    
\end{theorem}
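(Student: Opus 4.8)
The plan is to follow the standard template for analysing doubly robust one-step estimators based on an efficient influence function, adapted to the sequential structure here. Write $\widehat\psi(\overline a_T) - \psi(\overline a_T) = (\bbP_n - \bbP)\{\varphi(Z)\} + (\bbP_n - \bbP)\{\widehat\varphi(Z) - \varphi(Z)\} + \big(\bbP\{\widehat\varphi(Z)\} - \psi(\overline a_T)\big)$. The first term is the leading term; by the i.i.d.\ assumption and the finite-variance conclusion of Lemma~\ref{lem:eif} it satisfies a central limit theorem with variance $\sigma^2 = \bbV\{\varphi(Z)\}$. The second term is the empirical process term: because the nuisance estimates are fit on the independent training fold and evaluated on the estimation fold, one can condition on the training sample and use the standard sample-splitting lemma (e.g.\ Lemma~2 of \citet{kennedy2024semiparametric}) to bound it by $o_\bbP(1) \cdot n^{-1/2}\|\widehat\varphi - \varphi\|$, which is $o_\bbP(n^{-1/2})$ by condition~\ref{cond:consistent-eif}. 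The third term is the second-order remainder (bias) term, and controlling it is the crux of the argument.

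For the bias term I would exploit the Neyman-orthogonality / mixed-bias structure of $\varphi$. The key identity to establish is a von Mises-type expansion: $\bbP\{\widehat\varphi(Z)\} - \psi(\overline a_T) = R_n$, where $R_n$ is a sum over timepoints $t$ of products of errors in distinct nuisance estimates. Concretely, I expect $R_n$ to decompose into terms of the form (i) $\langle \widehat\pi_s - \pi_s,\ \widehat m_t - \widetilde m_t\rangle$-type cross products arising from the deterministic-treatment part $\varphi_m$, exactly as in the classical longitudinal g-formula EIF analysis (cf.\ \citet{bang2005doubly}), giving $O_\bbP(n^{-1/4}\cdot n^{-1/4}) = o_\bbP(n^{-1/2})$ under condition~\ref{cond:bias}; (ii) cross products involving $\dot w_t$ and $\widehat\pi_t - \pi_t$ (and $\widehat\pi_t' - \pi_t'$) coming from the stochastic-weight part $\varphi_w$, which are products of a propensity-score error with either another propensity-score error or a sequential-regression error — here the smoothness of $w_t$ (bounded second derivative) is used to Taylor-expand $w_t(\widehat\pi_t) - w_t(\pi_t)$ and produce the quadratic term; and (iii) terms involving the density ratio $\rho_t$, where an error $\widehat\rho_s - \rho_s$ multiplies a propensity-score error, handled via Cauchy-Schwarz and the $L_4$ rates in condition~\ref{cond:bias}. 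The moment conditions~\ref{cond:bounded-seq-reg}, \ref{cond:bounded-moment-est} and the boundedness of $r_t, r_t'$ from Lemma~\ref{lem:eif} are what make all these cross-product bounds legitimate (they ensure the relevant $L_2$ and $L_4$ norms of the multiplying factors are finite), and one needs $\widehat\rho_t$ to satisfy the same $2+\delta$ moment bound so that products with $\widehat\rho_t$ stay integrable; this is why the fourth-moment rates, rather than just $L_2$ rates, appear for $\pi_t'$ and $\rho_t$.

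The main obstacle, and where I would spend most of the effort, is deriving the exact form of $R_n$ and verifying it is genuinely second-order in \emph{every} summand. The recursive definition of $m_t$ in terms of \emph{estimated} downstream quantities (note $\widetilde m_t$, not $m_t$, appears in condition~\ref{cond:bias}) means the telescoping must be done carefully: when I plug $\widehat m, \widehat\pi, \widehat\pi', \widehat\rho$ into $\varphi$ and take $\bbP$-expectation, I should iterate expectations from $t=1$ outward (or $t=T$ inward), at each step using the tower property conditional on $(\overline A_t = \overline a_t, \overline X_t)$ or $(\overline A_t = \overline a_t', \overline X_t)$, so that the estimation-sample randomness in $\widehat m_{t+1}$ is absorbed into the definition of $\widetilde m_t$. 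The cross-world piece — the second summand of $\varphi_w$, with the running product $\prod_{s<t} r_s' \rho_s$ transporting from regime $\overline a'$ to regime $\overline a$ — is the least standard and requires checking that the change of measure induced by $\rho_s$ exactly cancels the mismatch between the conditioning set in $r_s'$ (which involves $\one(A_s = a_s')$) and the target regression chain (built under $\overline a$); I would verify this cancellation first on a single timepoint, then argue by induction. Once $R_n = o_\bbP(n^{-1/2})$ is established, Slutsky's theorem combined with the consistency of $\widehat\sigma^2$ for $\sigma^2$ — which follows from condition~\ref{cond:consistent-eif} together with a uniform integrability argument using the $2+\delta$ moment bound — yields the stated convergence $\sqrt{n/\widehat\sigma^2}\,\{\widehat\psi(\overline a_T) - \psi(\overline a_T)\} \indist N(0,1)$.
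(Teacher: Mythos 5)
Your proposal follows essentially the same route as the paper: the same three-term decomposition with a sample-splitting/Chebyshev bound for the empirical-process term, the same reduction of the bias $\bbE\{\widehat\varphi - \varphi\}$ to sums of products of nuisance errors (Taylor-expanding the smooth weights for the $\pi_t,\pi_t'$ contributions and applying Cauchy--Schwarz twice, together with the $2+\delta$ moment bound on the density ratios, to obtain exactly the $L_4$ requirements on $\widehat\pi_t'$ and $\widehat\rho_t$), and the same conclusion via the CLT, consistency of $\widehat\sigma^2$ from $\|\widehat\varphi-\varphi\|=o_\bbP(1)$, and Slutsky. The cross-world cancellation you flag as the key verification---that the running product $\prod_{s<t} r_s'\rho_s$ transports the regime-$\overline a'$ conditioning back to the regime-$\overline a$ chain via iterated expectations---is precisely what the paper's helper lemmas for Lemma~\ref{lem:eif} establish, so your plan matches the paper's argument.
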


The convergence result in Theorem~\ref{thm:convergence} provides important theoretical guarantees for the proposed estimator. The assumptions required are mild. Condition~\ref{cond:bounded-seq-reg} requires uniform boundedness of the outcome regression. Condition~\ref{cond:bounded-moment-est} asserts that the estimated covariate density ratios satisfy a bounded moment condition, as was assumed for the true ratios in Lemma~\ref{lem:eif}. Condition~\ref{cond:consistent-eif} asserts that the estimated influence function converges to the true influence function, which naturally follows from the consistency of the individual nuisance parameter estimates. Most importantly, condition~\ref{cond:bias} requires only $n^{-1/4}$ convergence rates for the nuisance parameters---a rate that is achievable even with flexible machine learning methods under standard assumptions \citep{gyorfi2002distribution}.  

\bigskip

The mild rate requirement is particularly noteworthy as it reflects the doubly robust nature of the estimator. The bias of the estimator $\widehat \psi(\overline a_T)$ can be bounded by a sum of products of errors of the nuisance functions. Therefore, when each individual nuisance parameter converges at the slower $n^{-1/4}$ rate, their products in the bias converge at the faster $n^{-1/2}$ rate, which is sufficient to maintain $\sqrt{n}$-consistency and asymptotic normality of the overall estimator. The $L_4(\bbP)$ convergence rates for the propensity scores $\pi_t^\prime$ and covariate density ratios $\rho_t$ arise from the structure of the bias terms, which involve products of covariate density ratios for which we only have bounded moment assumptions. The simplest bound we can obtain involves applying the Cauchy-Schwarz inequality twice, yielding $L_4(\bbP)$ requirements. These have appeared elsewhere in the causal inference literature; e.g., \citet{foster2023orthogonal}. A more refined analysis or alternative stronger assumptions could relax these conditions: uniform bounds on the covariate density ratios would permit the typical $L_2(\bbP)$ convergence rates, while supremum norm convergence for either nuisance function would allow the other to achieve only an $L_2(\bbP)$ rate. For ease of exposition, we focus on $L_4(\bbP)$ bounds. Moreover, also for ease of exposition, we state condition~\ref{cond:bias} as if every nuisance estimator must achieve the $n^{-1/4}$ rate; in practice, it suffices that each product of error terms attains the faster $n^{-1/2}$ rate, allowing some nuisances to converge more slowly provided their counterparts converge more quickly.

\subsection{Covariate density ratio estimation} \label{subsec:ratio-estimation}

We estimate the covariate density ratio $\rho_t(\overline X_t)$ using binary regression rather than direct density ratio estimation. This approach is motivated by the fact that direct density ratio estimators struggle in high-dimensional settings, while regression-based methods are more robust and widely available \citep{sugiyama2012density, choi2022density}.
We exploit the identity
\[
\rho_t(\overline X_t) \equiv \frac{d\bbP(X_t \mid \overline A_{t-1} = a_{t-1}, \overline X_{t-1})}{d\bbP(X_t \mid \overline A_{t-1} = \overline a_{t-1}^\prime, \overline X_{t-1})} = \frac{\bbP(\overline A_{t-1} = \overline a_{t-1} \mid \overline X_t) \bbP(\overline A_{t-1} = \overline a_{t-1}^\prime \mid \overline X_{t-1})}{\bbP(\overline A_{t-1} = \overline a_{t-1}^\prime \mid \overline X_t) \bbP(\overline A_{t-1} = \overline a_{t-1} \mid \overline X_{t-1})},
\]
which holds when $\rho_t(\overline X_t) < \infty$ almost surely---a condition guaranteed by convention that the propensity scores are zero whenever the conditioning set is undefined. Each of the four probabilities on the right-hand side can be estimated using standard nonparametric binary regression methods, making this approach both practical and well-supported by existing software.

\section{Data analysis} \label{sec:data-analysis}

In this section, we provide illustrative results from a data analysis examining the effect of union membership on wages. Our code is available at \url{https://github.com/alecmcclean/longitudinal-weighting}.

\bigskip

\subsection{Data Description}

We use the \texttt{wagepan} dataset from the \texttt{wooldridge} package in \texttt{R} \citep{wooldridge2024data, r2024language}. This dataset is from \citet{vella1998whose} and was obtained from the \textit{Journal of Applied Econometrics} archive at \url{http://qed.econ.queensu.ca/jae/}. The dataset contains employment information on 545 workers over eight years, from 1980-1987, though we focus on the first four years (1980-1983) for our analysis.

\bigskip

The dataset includes baseline covariates measured for each person: years of education, race (black/white), and ethnicity (hispanic/not hispanic). Additionally, several time-varying covariates are recorded for each year, including marital status (married/not married), health status (poor health: yes/no), labor market experience (in years), number of hours worked, occupation and industry classifications, region of residence (South/non-South), union membership, and the natural logarithm of hourly wage. An individual identifier allows us to link observations over time for each worker.

\subsection{Methodology}

Following \citet{vella1998whose}, we treat union membership as a time-varying treatment variable. Our two target regimes are \emph{always treated} ($\overline a_t = (1,1,1,1)$, i.e., always in a union) \emph{never treated} ($\overline a_t^\prime = (0,0,0,0)$, i.e., never in a union). We employ smooth trimming weights for both regimes:
\[
w_t(\pi_t) = 1 - \exp(-20 \pi_t) \text{ and } w_t^\prime(\pi_t^\prime) = 1 - \exp(-20 \pi_t^\prime).
\]
Our primary outcome of interest is the log wage in 1983. We estimated the cumulative cross-world weighted treatment effect, as in \eqref{eq:cumulative-weight}. To estimate both pieces of this effect, we use the doubly robust-style estimator in Algorithm~\ref{alg:dr-est} with five-fold cross-fitting. For nuisance function estimation, we employ the SuperLearner, combining multiple machine learning approaches: a linear model, generalized linear model, lasso with no interactions, a regression tree, and a random forest with default settings \citep{polley2024super, wright2017fast, therneau2023rpart}.

\subsection{Results}

Figure~\ref{fig:prop-scores} displays the distributions of the propensity scores under each of the targeted regimes. Notably, some probabilities are near zero. In the first row of Figure~\ref{fig:prop-scores}, propensity scores near zero indicate that certain workers would have close to zero probability of being in a union if they were in a union previously. In the second row, this indicates that certain workers would have almost probability one of being in a union if they were not in a union previously. This is a reasonable scenario for considering weighting to adapt to positivity violations. Table~\ref{tab:combined_summary} presents the distribution of the weights and the effective sample size at each timepoint. The effective sample size at each timepoint was calculated as the squared sum of cumulative weights up to that timepoint divided by the sum of squares of cumulative weights. Together, these results show that the weights were not extreme.  Table~\ref{tab:combined_summary} also provides evidence on the common support assumption, in Assumption~\ref{asmp:partial-common-support}. It shows that the covariate density ratios for timepoints 2 to 4 are not that extreme, suggesting that the common support assumption may hold for all subjects.

\bigskip

\begin{table}[ht]
    \centering
    \begin{tabular}{lcccccc}
        \toprule
        \textbf{Statistic} & $\prod_{t=1}^{4} w_t(\pi_t)\,w_t'(\pi_t')$ & $\rho_{2}$ & $\rho_{3}$ & $\rho_{4}$ & \textbf{Time} & \textbf{ESS} \\
        \midrule
        Min\phantom{.}                     & 0.161   & 0.144   & 0.121   & 0.212   & 1 & 539 \\
        1\textsuperscript{st} Quartile    & 0.486   & 0.761   & 0.736   & 0.816   & 2 & 535 \\
        Median                             & 0.630   & 1.060   & 0.990   & 1.037   & 3 & 528 \\
        Mean                               & 0.620   & 1.155   & 1.068   & 1.152   & 4 & 505 \\
        3\textsuperscript{rd} Quartile    & 0.757   & 1.426   & 1.321   & 1.296   &   &     \\
        Max                                & 0.959   & 4.126   & 3.241   & 11.80  &   &     \\
        \bottomrule
    \end{tabular}
    \caption{Combined summary of the distribution of cumulative weights (column 2), the covariate density ratio distributions at times 2–4 (columns 3-5), and the effective sample size (ESS) at each timepoint.}
    \label{tab:combined_summary}
\end{table}

Table~\ref{tab:results} presents the final results. We reject the null hypothesis of no effect at $\alpha = 0.05$, with an estimated effect of 0.216 (95\% CI: [0.046, 0.387]). Interpreting this on the log wage scale, this corresponds to approximately a 22\% wage increase in 1983. These results suggest union membership had a positive causal effect on wages in the early 1980s. However, these findings rely on the assumption of no unmeasured confounding, which may be questionable in this observational setting, warranting sensitivity analyses in future work.

\begin{table}[ht]
\centering
\begin{tabular}{l r}
\toprule
\textbf{Metric} & \textbf{Value} \\
\midrule
Cumulative cross-world weighted treatment effect      & 0.216 \\
95\% CI       & [0.046,\,0.387] \\
\bottomrule
\end{tabular}
\caption{Point estimate and 95\% confidence interval.}
\label{tab:results}
\end{table}

\begin{figure}[ht]
    \centering
    \includegraphics[width = 0.9\linewidth]{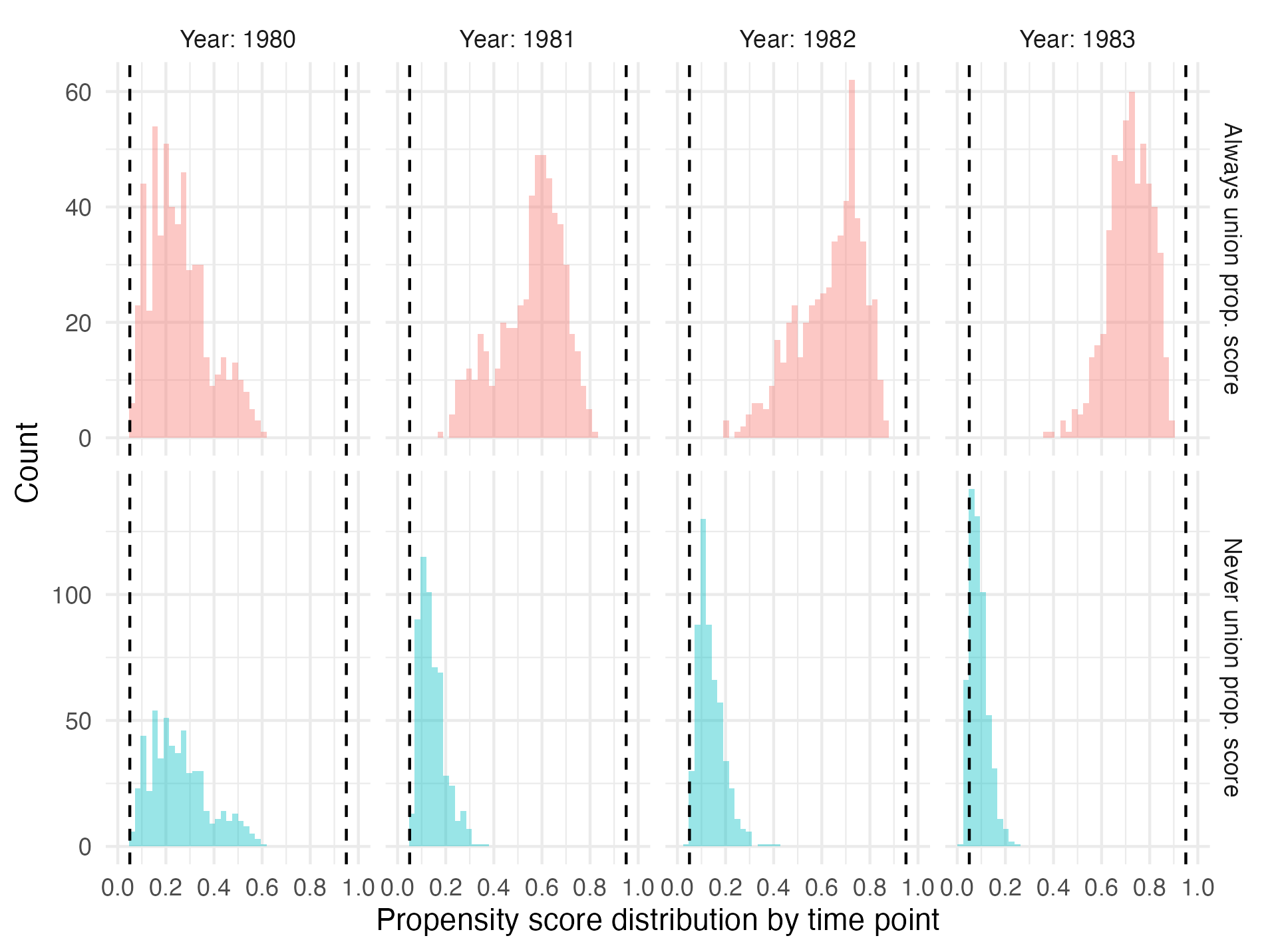}
    \caption{Propensity score distribution.}
    \label{fig:prop-scores}
\end{figure}

\section{Discussion} \label{sec:discussion}

This paper formalizes cumulative cross-world weighted effects, which address positivity violations in longitudinal settings while preserving the scientific target of interest—the mean difference in potential outcomes under two treatment regimes. We establish identification under strong sequential randomization and develop efficient estimation methods through doubly robust approaches.

\bigskip

A key contribution is explicitly highlighting a fundamental trade-off in longitudinal causal inference. Cross-world effects isolate the pure causal contrast between regimes, offering the clearest answer to mechanistic questions. However, these effects cannot be implemented as feasible interventions, similar to natural direct effects in mediation or survivor effects subject to censoring by death \citep{frangakis2002principal}. In contrast, stochastic longitudinal modified treatment policies, such as flip interventions, are implementable but yield effects that conflate the intervention’s impact on the ultimate outcome with its impact on intermediate covariates and treatments. Thus, the preferred approach depends on whether a given research question prioritizes mechanistic understanding or practical policy implementation.

\bigskip

Another conceptual contribution is our demonstration of the importance of common support for time-varying covariate distributions, which ensures the identified weighted functional faithfully represents genuine underlying causal contrasts. While violations of common support do not render the mechanistic causal contrast meaningless, they do indicate that the identified functional estimated from observed data no longer accurately reflects this contrast. Therefore, practitioners should explicitly assess common support using covariate density ratios, as illustrated in our data analysis.

\bigskip

Future work could explore data-adaptive methods for selecting weights and extend beyond propensity score weighting towards direct covariate balancing approaches. Additionally, sensitivity analyses specifically tailored to cross-world estimands merit further investigation.

\section*{References}
\vspace{-0.3in}
\bibliographystyle{plainnat}
\bibliography{references}

\newpage
\appendix

\newgeometry{margin=1in} 
\color{black}
\section*{Appendix}

\begin{itemize}
    \item[Appendix~\ref{app:id}] contains the proofs for the identification results in Section~\ref{sec:identification}.
    \item[Appendix~\ref{app:estimation}] contains the proofs for the efficiency theory results in Section~\ref{sec:estimation}  
\end{itemize}

\section{Proofs for Section~\ref{sec:identification}} \label{app:id}

\subsection{Lemma~\ref{lem:id-prop-scores}}

\begin{proof}
    We'll focus only on identifying $\bbP \{ A_t (\overline a_{t-1}) = a_t \mid \overline X_t(\overline a_{t-1}) \}$ because the argument for $\bbP \{ A_t (\overline a_{t-1}^\prime) = a_t^\prime \mid \overline X_t(\overline a_{t-1}^\prime) \}$ follows trivially.  We have
    \begin{align*}
        \bbP \{ A_t(\overline a_{t-1}) = a_t \mid \overline X_t(\overline a_{t-1})  \} &= \frac{d\bbP \{ A_t(\overline a_{t-1}) = a_t, X_t(\overline a_{t-1}), \dots, X_2(a_1) \mid X_1 \}}{d\bbP \{ X_t(\overline a_{t-1}), X_2(a_1) \mid X_1 \}} \\
        &= \frac{\bbP \{ A_t(\overline a_{t-1}) = a_t, X_t(\overline a_{t-1}), \dots, X_2 \mid A_1 = a_1 X_1 \}}{d\bbP \{ X_t(\overline a_{t-1}), \dots, X_2 \mid A_1 = a_1, X_1 \}} \\ 
        &= \frac{d\bbP \{ A_t(\overline a_{t-1}) = a_t, X_t(\overline a_{t-1}), X_3(\overline a_2) \mid X_2, A_1 = a_1, X_1 \}}{d\bbP \{ X_t(\overline a_{t-1}), \dots, X_3(\overline a_2) \mid X_2, A_1 = a_1, X_1 \}} \\
        &= \frac{d\bbP \{ A_t(\overline a_{t-1}) = a_t, X_t(\overline a_{t-1}), X_3 \mid  A_2 = a_2, X_2, A_1 = a_1 X_1 \}}{d\bbP \{ X_t(\overline a_{t-1}), \dots, X_3 \mid A_2 = a_2, X_2, A_1 = a_1, X_1 \}}
    \end{align*}
    where the first line follows by factoring the conditional probability and the second line by strong sequential randomization, the positivity assumption, that $\pi_1(X_1) > 0$, so that the conditioning set is well-defined, and by consistency, so that $X_2(a_1) = X_2$, conditional on $A_1 = a_1, X_1$. We leave the other left-hand side variables as their natural values for ease of notation, although one could write $X_3(a_2, A_1)$ and so on, instead of what we have written. The third line follows by factoring the conditional probabilities again, and the fourth line follows by strong sequential randomization, positivity, and consistency. This argument can be repeated $t-3$ more times to yield
    \[
    \bbP \{ A_t(\overline a_{t-1}) = a_t \mid \overline X_t(\overline a_{t-1})  \} = \frac{d\bbP ( A_t = a_t, X_t \mid \overline A_{t-1} = \overline a_{t-1}, \overline X_{t-1} )}{d\bbP ( X_t \mid \overline A_{t-1} = \overline a_{t-1}, \overline X_{t-1} )} = \bbP(A_t = a_t \mid \overline X_t, \overline A_{t-1} = \overline a_{t-1}),
    \]
    where the second equality follows by Bayes' rule.
\end{proof}

\subsection{Theorem~\ref{thm:id-cumulative}}

\begin{proof}
    \textbf{Timepoint 1: } We begin with \footnotesize
    \begin{align*}
        &\bbE \left( Y(\overline a_T) \prod_{t=1}^T w_t\big[ \bbP \{ A_t(\overline a_{t-1}) = a_t \mid \overline X_t(\overline a_{t-1}) \} \big] w_t^\prime \big[ \bbP \{ A_t(\overline a_{t-1}^\prime) = a_t \mid \overline X_t(\overline a_{t-1}^\prime) \} \big] \right) \\
        &= \bbE \left( \bbE \left[ Y(\overline a_T) \prod_{t=2}^T w_t\big[ \bbP \{ A_t(\overline a_{t-1}) = a_t \mid \overline X_t(\overline a_{t-1}) \} \big] w_t^\prime \big[ \bbP \{ A_t(\overline a_{t-1}^\prime) = a_t \mid \overline X_t(\overline a_{t-1}^\prime) \} \big] \mid X_1 \right] w_1\{ \pi_1(X_1)\} w_1^\prime \{ \pi_1^\prime(X_1) \} \right),
    \end{align*} \normalsize
    which follows by iterated expectations on $X_1$.  Since $\pi_1(X_1) \pi_1^\prime(X_1) = 0 \implies w_1\{ \pi_1(X_1) \} w_1^\prime\{ \pi_1^\prime(X_1)\} = 0$ by assumption, in the inner expectation is equal to the expectation that also conditions on the event $\big\{ \pi_1(X_1) > 0, \pi_1^\prime(X_1) > 0 \big\}$, because otherwise the overall expectation is zero.
    
    \bigskip
    
    \noindent \textbf{Timepoint 2:} Then, by Lemma~\ref{lem:id-prop-scores}, the second timepoint propensity scores can be identified. Revisiting the overall g-formula, and omitting some arguments for brevity, we have \footnotesize
    \begin{align*}
        &\bbE \left( \bbE \left[ Y(\overline a_T) \prod_{t=2}^T w_t\big[ \bbP \{ A_t(\overline a_{t-1}) = a_t \mid \overline X_t(\overline a_{t-1}) \} \big] w_t^\prime \big[ \bbP \{ A_t(\overline a_{t-1}^\prime) = a_t \mid \overline X_t(\overline a_{t-1}^\prime) \} \big] \mid X_1 \right] w_1 ( \pi_1 ) w_1^\prime (\pi_1^\prime) \right) \\
        &= \bbE \left( \bbE \left[ Y(\overline a_T) \prod_{t=3}^T w_t\big[ \bbP \{ A_t(\overline a_{t-1}) = a_t \mid \overline X_t(\overline a_{t-1}) \} \big] w_t^\prime \big[ \bbP \{ A_t(\overline a_{t-1}^\prime) = a_t \mid \overline X_t(\overline a_{t-1}^\prime) \} \big] w_2(\pi_2) w_2^\prime(\pi_2^\prime) \mid X_1 \right] w_1 ( \pi_1 ) w_1^\prime (\pi_1^\prime) \right)
    \end{align*} \normalsize
    Then, by strong sequential exchangeability and iterated expectations on $X_2 \mid A_1 = a_1, X_1$, we have \footnotesize
    \begin{align*}
        &\bbE \left( \bbE \left[ Y(\overline a_T) \prod_{t=3}^T w_t\big[ \bbP \{ A_t(\overline a_{t-1}) = a_t \mid \overline X_t(\overline a_{t-1}) \} \big] w_t^\prime \big[ \bbP \{ A_t(\overline a_{t-1}^\prime) = a_t \mid \overline X_t(\overline a_{t-1}^\prime) \} \big] w_2(\pi_2) w_2^\prime(\pi_2^\prime) \mid X_1 \right] w_1 ( \pi_1 ) w_1^\prime (\pi_1^\prime) \right)  \\
        &\hspace{-0.75in}= \bbE \left\{ \bbE \left( \bbE \left[ Y(\overline a_T) \prod_{t=3}^T w_t\big[ \bbP \{ A_t(\overline a_{t-1}) = a_t \mid \overline X_t(\overline a_{t-1}) \} \big] w_t^\prime \big[ \bbP \{ A_t(\overline a_{t-1}^\prime) = a_t \mid \overline X_t(\overline a_{t-1}^\prime) \} \big] \mid X_2, A_1 = a_1, X_1 \right] w_2(\pi_2) w_2^\prime(\pi_2^\prime) \mid X_1 \right) w_1 ( \pi_1 ) w_1^\prime (\pi_1^\prime) \right\}.
    \end{align*}
    \normalsize
    Next, notice that $\pi_2 \pi_2^\prime = 0 \implies w_2 (\pi_2) w_2^\prime(\pi_2^\prime) = 0$. Hence, the innermost expectation is equal to the expectation that also conditions on the event 
    \[
    \left\{ \pi_1(X_1) > 0, \pi_1^\prime(X_1) > 0, \pi_2(\overline X_2) > 0, \pi_2^\prime(\overline X_2) > 0 \right\}.
    \]
    Moreover, note that when the arguments of the second timepoint propensity scores are undefined then the propensity scores equal zero by convention and the whole expectation is equal to zero. Therefore, either the inner expectation is well-defined or the overall expectation is zero.

    \bigskip

    \noindent \textbf{Induction: } The rest of the steps to obtain the full identification follow by repeating the same argument as for $t=2$. Identifying the potential outcomes follows by consistency.    
\end{proof}

\section{Proofs for Section~\ref{sec:estimation}} \label{app:estimation}

Throughout, we will omit arguments of functions for brevity unless necessary for clarity. As in the main paper, let $\widetilde m(\overline X_t) = \bbE ( \widehat m_t \widehat w_t \widehat w_t^\prime \mid \overline a_t, \overline X_t)$.  We will first establish a series of helper lemmas, which imply both a bound on the bias and the result for the efficient influence function.

\begin{lemma} \label{lem:eif-helper-1}
    Under the setup of Lemma~\ref{lem:eif}, 
    \begin{align*}
        \bbE \{ \widehat \varphi_m(Z) \} &= m_0 + \sum_{t=1}^{T}  \bbE \left\{ \left( \prod_{s=1}^{t} \widehat r_s - \prod_{s=1}^{t}  r_s \right) \left( \widetilde m_t - \widehat m_t \right) \right\} \\
        &+ \sum_{t=1}^{T} \bbE \left[ \left( \prod_{s=1}^{t-1}  r_s \right) \bbE \left\{ \widehat m_t \left( \widehat w_t \widehat w_t^\prime - w_t w_t^\prime \right) \mid \overline a_{t-1}, \overline X_{t-1} \right\} \right]
    \end{align*}
\end{lemma}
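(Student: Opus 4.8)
The plan is to expand $\widehat\varphi_m(Z)$ in its telescoping form, take expectations summand by summand, and match the three pieces on the right-hand side. Write $\widehat R_t=\prod_{s=1}^t\widehat r_s$ and $R_t=\prod_{s=1}^t r_s$ (with $\widehat R_0=R_0=1$), and abbreviate $\widehat Q_t=\widehat m_{t+1}\widehat w_{t+1}\widehat w_{t+1}^\prime$, so that $\widehat Q_T=Y$ (as $\widehat m_{T+1}=Y$ and $\widehat w_{T+1}=\widehat w_{T+1}^\prime=1$) and, by construction of the sequential regression with estimated pseudo-outcome, $\widetilde m_t=\bbE(\widehat Q_t\mid\overline a_t,\overline X_t)$, with $\widetilde m_T=\bbE(Y\mid\overline a_T,\overline X_T)=m_T$. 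With this notation
\[
\widehat\varphi_m(Z)=\widehat m_1\widehat w_1\widehat w_1^\prime+\sum_{t=1}^T\widehat R_t\big(\widehat Q_t-\widehat m_t\big).
\]
I work conditionally on the training fold, so every hatted quantity is a fixed measurable function; then $\widehat r_s$ depends only on $(A_s,\overline X_s)$, $\widehat R_t$ is supported on $\{\overline A_t=\overline a_t\}$, and the boundedness of the weights assumed in Lemma~\ref{lem:eif} (so that $r_t$ and $\widehat r_t$ are bounded), together with the convention that propensity scores---hence their weights and the ratios---vanish on null conditioning events, make every expectation below finite and every manipulation legitimate.

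The first step replaces $\widehat Q_t$ by $\widetilde m_t$: since $\widehat R_t$ is $\sigma(\overline A_t,\overline X_t)$-measurable and supported on $\{\overline A_t=\overline a_t\}$, iterating expectations gives $\bbE\{\widehat R_t\widehat Q_t\}=\bbE\{\widehat R_t\widetilde m_t\}$, hence $\bbE\{\widehat R_t(\widehat Q_t-\widehat m_t)\}=\bbE\{\widehat R_t(\widetilde m_t-\widehat m_t)\}$. Adding and subtracting $R_t$ splits each term as $\bbE\{(\widehat R_t-R_t)(\widetilde m_t-\widehat m_t)\}+\bbE\{R_t(\widetilde m_t-\widehat m_t)\}$, and summing the first pieces over $t$ produces the product-of-errors term in the claim. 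It then remains to verify
\[
\bbE\{\widehat m_1\widehat w_1\widehat w_1^\prime\}+\sum_{t=1}^T\bbE\big\{R_t(\widetilde m_t-\widehat m_t)\big\}=m_0+\sum_{t=1}^T\bbE\Big[\Big(\prod_{s=1}^{t-1}r_s\Big)\bbE\big\{\widehat m_t(\widehat w_t\widehat w_t^\prime-w_tw_t^\prime)\mid\overline a_{t-1},\overline X_{t-1}\big\}\Big].
\]

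For this I will use three elementary identities, all consequences of iterated expectations plus $\bbE\{\one(A_t=a_t)\mid\overline A_{t-1}=\overline a_{t-1},\overline X_t\}=\pi_t(\overline X_t)$. (i) For any integrable $g$ depending only on $\overline X_t$, $\bbE\{R_t\,g(\overline X_t)\}=\bbE\{R_{t-1}w_tw_t^\prime\,g(\overline X_t)\}$ (the $1/\pi_t$ in $r_t$ cancels the conditional probability of $\{A_t=a_t\}$); applied repeatedly with the recursion $m_{t-1}=\bbE(m_tw_tw_t^\prime\mid\overline a_{t-1},\overline X_{t-1})$ this gives $\bbE\{R_Tm_T\}=\bbE\{w_1w_1^\prime m_1\}=m_0$. (ii) By the definition of $\widetilde m_{t-1}$ and the same measurability, $\bbE\{R_{t-1}\widehat m_t\widehat w_t\widehat w_t^\prime\}=\bbE\{R_{t-1}\widetilde m_{t-1}\}$, and $\bbE\{R_{t-1}\bbE(h\mid\overline a_{t-1},\overline X_{t-1})\}=\bbE\{R_{t-1}h\}$ for any $h$ depending on $\overline X_t$; consequently each summand on the right-hand side equals $\bbE\{R_{t-1}\widetilde m_{t-1}\}-\bbE\{R_{t-1}w_tw_t^\prime\widehat m_t\}$. (iii) $\widetilde m_T=m_T$. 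Substituting (ii), the terms $\bbE\{R_{t-1}w_tw_t^\prime\widehat m_t\}$ cancel against $\bbE\{R_t\widehat m_t\}=\bbE\{R_{t-1}w_tw_t^\prime\widehat m_t\}$ (identity (i) with $g=\widehat m_t$) coming from the left-hand side; the sum $\sum_{t=1}^T\bbE\{R_{t-1}\widetilde m_{t-1}\}$ reindexes to $\bbE\{\widehat m_1\widehat w_1\widehat w_1^\prime\}+\sum_{t=1}^{T-1}\bbE\{R_t\widetilde m_t\}$, and after cancelling the common $\bbE\{\widehat m_1\widehat w_1\widehat w_1^\prime\}$ and the overlapping $\widetilde m_t$-terms the entire statement reduces to the single equality $\bbE\{R_T\widetilde m_T\}=m_0$, which is (iii) followed by (i).

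The main obstacle is the bookkeeping in this last step. It is not a naive telescoping of consecutive terms: the weight-error summands are produced by rewriting $\bbE\{R_{t-1}\widehat m_t\widehat w_t\widehat w_t^\prime\}$ as $\bbE\{R_{t-1}\widetilde m_{t-1}\}$, and the sums collapse only at the very end to $\bbE\{R_T\widetilde m_T\}=\bbE\{R_Tm_T\}=m_0$. One must also keep careful track, at each conditioning step, of which portion of the history each factor is measurable with respect to, and verify that the zero-propensity convention does not break identity (i); these are routine but easy to slip on, and the boundedness hypotheses of Lemma~\ref{lem:eif} are exactly what keep every expectation in the argument well-defined.
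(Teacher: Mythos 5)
Your proof is correct and follows essentially the same route as the paper's: iterated expectations on $\overline A_t=\overline a_t,\overline X_t$ to replace the estimated pseudo-outcomes by $\widetilde m_t$, adding and subtracting $\prod_s r_s$ to isolate the product-of-errors term, and then collapsing the remaining sum via the recursion $m_{t-1}=\bbE(m_t w_t w_t^\prime\mid\overline a_{t-1},\overline X_{t-1})$ and the identity $\bbE\{(\prod_{s=1}^T r_s)Y\}=m_0$. The only difference is presentational—you verify the residual identity by cancellation against the claimed right-hand side rather than deriving it forward by reindexing, which is an equivalent bookkeeping of the same telescoping argument.
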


\begin{proof}
    \begin{align*}
        \bbE \{ \widehat \varphi_m(Z) \} &= \bbE \left\{ \widehat m_1\widehat w_1 \widehat w_1^\prime + \sum_{t=1}^{T} \left( \prod_{s=1}^{t} \widehat r_s \right) \left( \widehat m_{t+1} \widehat w_{t+1} \widehat w_{t+1}^\prime - \widehat m_t \right) \right\} \\
        &= \widetilde m_0 + \sum_{t=1}^{T} \bbE \left\{ \left( \prod_{s=1}^{t} \widehat r_s \right) \left( \widetilde m_t - \widehat m_t \right) \right\} \\
        &= \widetilde m_0 + \sum_{t=1}^T \bbE \left\{ \left( \prod_{s=1}^{t} \widehat r_s - \prod_{s=1}^{t} r_s \right) \left( \widetilde m_t - \widehat m_t \right) + \left( \prod_{s=1}^t r_s \right) (\widetilde m_t - \widehat m_t) \right\} 
    \end{align*}
    where second equality follows by iterated expectations on $\overline A_t = \overline a_t, \overline X_t$ within each summand and the third line follows by adding zero. Next, we have
    \begin{align*}
        \widetilde m_0 &+ \sum_{t=1}^T \bbE \left\{ \left( \prod_{s=1}^t r_s \right) ( \widetilde m_t - \widehat m_t) \right\} = \sum_{t=0}^{T-1} \bbE \left\{ \left(\prod_{s=0}^t r_s \right) \widetilde m_t \right\} - \sum_{t=0}^{T-1} \bbE \left\{ \left( \prod_{s=0}^{t+1} r_s \right) \widehat m_{t+1} \right\} + m_0 \\
        &= \sum_{t=0}^{T-1} \bbE \left[ \left( \prod_{s=0}^t r_s \right) \left\{ \bbE \left( \widehat m_{t+1} \widehat w_{t+1} \widehat w_{t+1}^\prime - \widehat m_{t+1} w_{t+1} w_{t+1}^\prime \mid \overline a_t, \overline X_t \right) \right\} \right] + m_0 \\
        &= \sum_{t=1}^{T} \bbE \left[ \left( \prod_{s=1}^{t-1} r_s \right) \bbE \left\{ \widehat m_t \left( \widehat w_t \widehat w_t^\prime - w_t w_t^\prime \right) \mid \overline a_{t-1} \overline X_{t-1} \right\} \right] + m_0 
    \end{align*}
    where the first line follows by rearranging the sum over timepoints and because $\bbE \left\{ \left( \prod_{s=1}^T r_s \right) Y \right\} = m_0$, the second by iterated expectations on $\overline a_t, \overline X_t$, and final line by gathering terms.
\end{proof}

\begin{lemma} \label{lem:eif-helper-2}
    Under the setup of Lemma~\ref{lem:eif},  
    \begin{align*}
        \bbE \{ \widehat \varphi_w(Z) \} &= \sum_{t=1}^{T} \bbE \Bigg\{ \left( \prod_{s=1}^{t-1} \widehat r_s - \prod_{s=1}^{t-1} r_s \right) \widehat m_t \bbE \left( \widehat \phi_t \mid \overline X_t, \overline a_{t-1} \right) \widehat w_t^\prime \Bigg\} \\
        &+ \sum_{t=1}^T \bbE \Bigg\{ \left( \prod_{s=1}^{t-1} \widehat r_s^\prime \widehat \rho_s - \prod_{s=1}^{t-1} r_s^\prime \rho_s \right) \widehat m_t \widehat w_t \widehat \rho_t \bbE \left( \widehat \phi_t^\prime \mid \overline a_{t-1}^\prime, \overline X_t \right) \Bigg\} \\
        &+ \sum_{t=1}^{T}\bbE \Bigg\{ \left( \prod_{s=1}^{t-1} r_s^\prime \rho_s \right) \widehat m_t \widehat w_t (\widehat \rho_t - \rho_t) \bbE \left( \widehat \phi_t^\prime \mid \overline a_{t-1}^\prime, \overline X_t \right) \Bigg\} \\
        &+ \sum_{t=1}^{T} \bbE \left[ \left( \prod_{s=1}^{t-1} r_s \right) \widehat m_t \left\{ \bbE \left( \widehat \phi_t \mid \overline X_t, \overline a_{t-1} \right) \widehat w_t^\prime + \widehat w_t \bbE \left( \widehat \phi_t^\prime \mid \overline a_{t-1}^\prime, \overline X_t \right) + \widehat w_t \widehat w_t^\prime - w_t w_t^\prime \right\} \right] \\
        &+ \sum_{t=1}^{T} \bbE \left\{ \left( \prod_{s=1}^{t-1} r_s \right) \widehat m_t \left( w_t w_t^\prime - \widehat w_t \widehat w_t^\prime \right) \right\}
    \end{align*}
\end{lemma}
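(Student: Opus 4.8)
The plan is to expand $\bbE\{\widehat\varphi_w(Z)\}$ as the sum of the expectations of the two summands that define $\varphi_w$ in Lemma~\ref{lem:eif}, with every nuisance replaced by its estimate, and then to process each summand with the same bookkeeping used in Lemma~\ref{lem:eif-helper-1}: apply iterated expectations to expose the conditional scores $\bbE(\widehat\phi_t\mid\overline X_t,\overline a_{t-1})$ and $\bbE(\widehat\phi_t^\prime\mid\overline a_{t-1}^\prime,\overline X_t)$, then repeatedly add and subtract true-nuisance versions of the weight products to peel off the product-of-errors terms appearing in the statement. Throughout I omit function arguments, as elsewhere in the appendix.

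For the first summand $\sum_{t=1}^T\bbE[\{\prod_{s=1}^{t-1}\widehat r_s\}\,\widehat m_t\,\widehat\phi_t\,\widehat w_t^\prime]$, I would condition on $(\overline X_t,\overline A_{t-1})$. Since only $\widehat\phi_t$ depends on $A_t$ and the indicators $\one(A_s=a_s)$ inside the $\widehat r_s$ make the integrand vanish off $\{\overline A_{t-1}=\overline a_{t-1}\}$, this produces $\sum_t\bbE[\{\prod_{s=1}^{t-1}\widehat r_s\}\,\widehat m_t\,\widehat w_t^\prime\,\bbE(\widehat\phi_t\mid\overline X_t,\overline a_{t-1})]$; adding and subtracting $\prod_{s=1}^{t-1}r_s$ gives the first displayed term plus the $\bbE(\widehat\phi_t\mid\overline X_t,\overline a_{t-1})\,\widehat w_t^\prime$ piece of the fourth displayed term. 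For the second summand $\sum_t\bbE[\{\prod_{s=1}^{t-1}\widehat r_s^\prime\widehat\rho_s\}\,\widehat m_t\,\widehat w_t\,\widehat\rho_t\,\widehat\phi_t^\prime]$, I would again condition on $(\overline X_t,\overline A_{t-1})$, now using $\one(A_s=a_s^\prime)$ to restrict to $\{\overline A_{t-1}=\overline a_{t-1}^\prime\}$ (with $\widehat\rho_t$ staying outside as a function of $\overline X_t$), obtaining $\sum_t\bbE[\{\prod_{s=1}^{t-1}\widehat r_s^\prime\widehat\rho_s\}\,\widehat m_t\,\widehat w_t\,\widehat\rho_t\,\bbE(\widehat\phi_t^\prime\mid\overline a_{t-1}^\prime,\overline X_t)]$. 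Adding and subtracting $\prod_{s=1}^{t-1}r_s^\prime\rho_s$ splits off the second displayed term, and adding and subtracting $\rho_t$ inside the remainder splits off the third displayed term, leaving $\sum_t\bbE[\{\prod_{s=1}^{t-1}r_s^\prime\rho_s\}\,\widehat m_t\,\widehat w_t\,\rho_t\,\bbE(\widehat\phi_t^\prime\mid\overline a_{t-1}^\prime,\overline X_t)]$.

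The crux of the argument is a change-of-measure identity: for every function $g$ of $\overline X_t$ for which the expectations are finite,
\[
\bbE\Bigl[\Bigl\{\prod\nolimits_{s=1}^{t-1}r_s^\prime\rho_s\Bigr\}\,\rho_t\,g(\overline X_t)\Bigr]=\bbE\Bigl[\Bigl\{\prod\nolimits_{s=1}^{t-1}r_s\Bigr\}\,g(\overline X_t)\Bigr].
\]
I would prove this by peeling variables from $X_t$ down to $X_1$, alternating two elementary steps: at an $X_s$-peeling step, because $\rho_s=d\bbP(X_s\mid\overline a_{s-1},\overline X_{s-1})/d\bbP(X_s\mid\overline a_{s-1}^\prime,\overline X_{s-1})$ (finite almost surely by Assumption~\ref{asmp:partial-common-support} and the convention that propensity scores vanish on null conditioning events), conditioning on $(\overline X_{s-1},\overline A_{s-1}=\overline a_{s-1}^\prime)$ turns $\bbE[\rho_s\,h(\overline X_s)\mid\overline a_{s-1}^\prime,\overline X_{s-1}]$ into $\bbE[h(\overline X_s)\mid\overline a_{s-1},\overline X_{s-1}]$, exactly what the $X_s$-peeling step produces on the right-hand side; at an $A_s$-peeling step, $\bbE[r_s^\prime\mid\overline X_s,\overline a_{s-1}^\prime]=w_sw_s^\prime=\bbE[r_s\mid\overline X_s,\overline a_{s-1}]$. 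The extra factor $\rho_t$ is absorbed at the first ($X_t$) step, after which the remaining $\rho_s$ pair off with the $X_s$-steps, so both sides telescope to the same nested expectation. Applying this with $g=\widehat m_t\,\widehat w_t\,\bbE(\widehat\phi_t^\prime\mid\overline a_{t-1}^\prime,\overline X_t)$, a deterministic function of $\overline X_t$ given the training fold, converts the leftover term into the $\widehat w_t\,\bbE(\widehat\phi_t^\prime\mid\overline a_{t-1}^\prime,\overline X_t)$ piece of the fourth displayed term.

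At this point $\bbE\{\widehat\varphi_w(Z)\}$ equals the first, second and third displayed terms plus $\sum_t\bbE[\{\prod_{s=1}^{t-1}r_s\}\,\widehat m_t\,\{\bbE(\widehat\phi_t\mid\overline X_t,\overline a_{t-1})\widehat w_t^\prime+\widehat w_t\,\bbE(\widehat\phi_t^\prime\mid\overline a_{t-1}^\prime,\overline X_t)\}]$. Adding and subtracting $\sum_t\bbE[\{\prod_{s=1}^{t-1}r_s\}\,\widehat m_t\,(\widehat w_t\widehat w_t^\prime-w_tw_t^\prime)]$ rewrites this last sum as the full fourth displayed term plus the fifth displayed term, which finishes the proof; this matched form mirrors the $\widehat w_t\widehat w_t^\prime-w_tw_t^\prime$ bookkeeping in Lemma~\ref{lem:eif-helper-1} and is chosen so that, once the two helper lemmas are combined, the fifth term here cancels the last line of Lemma~\ref{lem:eif-helper-1}. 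I expect the change-of-measure identity to be the main obstacle: making the alternating indicator/density-ratio recursion precise, and checking---via the uniform boundedness of $r_s$, $r_s^\prime$ and $\widehat m_t$ and the $2+\delta$ moment bound on $\rho_t$ from the hypotheses of Lemma~\ref{lem:eif}, together with Assumption~\ref{asmp:partial-common-support}---that every expectation and interchange of integration along the way is justified.
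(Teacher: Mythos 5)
Your proposal is correct and follows essentially the same route as the paper's proof: iterated expectations to expose $\bbE(\widehat\phi_t\mid\overline X_t,\overline a_{t-1})$ and $\bbE(\widehat\phi_t^\prime\mid\overline a_{t-1}^\prime,\overline X_t)$, add-and-subtract steps to peel off the error-product and $(\widehat\rho_t-\rho_t)$ terms, and then the change-of-measure/peeling computation that turns $\bigl(\prod_{s<t} r_s^\prime\rho_s\bigr)\rho_t$ into $\prod_{s<t} r_s$, followed by the final $\widehat w_t\widehat w_t^\prime - w_tw_t^\prime$ add-and-subtract. The only cosmetic difference is that you isolate the cross-world change-of-measure step as a standalone identity proved by alternating $X_s$- and $A_s$-peeling, whereas the paper carries out the same telescoping calculation inline via the nested weighted g-formula integral.
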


\begin{proof}
    To start, 
    \begin{align*}
        \bbE \{ \widehat \varphi_w(Z) \} &= \sum_{t=1}^{T} \bbE \Bigg\{  \left( \prod_{s=1}^{t-1} \widehat r_s \right) \widehat m_t \widehat \phi_t \widehat w_t^\prime + \left( \prod_{s=1}^{t-1} \widehat r_s^\prime \widehat \rho_s  \right) \widehat m_t \widehat w_t \widehat \rho_t \widehat \phi_t^\prime \Bigg\} \\
        &= \sum_{t=1}^{T} \bbE \Bigg\{ \left( \prod_{s=1}^{t-1} \widehat r_s - \prod_{s=1}^{t-1} r_s \right) \widehat m_t \bbE \left( \widehat \phi_t \mid \overline X_t, \overline a_{t-1} \right) \widehat w_t^\prime + \left( \prod_{s=1}^{t-1} r_s \right) \widehat m_t \bbE \left( \widehat \phi_t \mid \overline X_t, \overline a_{t-1} \right) \widehat w_t^\prime \Bigg\} \\
        &+ \sum_{t=1}^T \bbE \Bigg\{ \left( \prod_{s=1}^{t-1} \widehat r_s^\prime \widehat \rho_s - \prod_{s=1}^{t-1} r_s^\prime \rho_s \right) \widehat m_t \widehat w_t \widehat \rho_t \bbE \left( \widehat \phi_t^\prime \mid \overline a_{t-1}^\prime, \overline X_t \right) + \left( \prod_{s=1}^{t-1} r_s^\prime \rho_s  \right) \widehat m_t \widehat w_t \widehat \rho_t \bbE \left( \widehat \phi_t^\prime \mid \overline a_{t-1}^\prime, \overline X_t \right) \Bigg\}
    \end{align*}
    where the second equality follows by adding zero and iterated expectations on $\overline A_{t-1} = \overline a_{t-1}, \overline X_t$ in the first line and iterated expectations on $\overline A_{t-1} = \overline a_{t-1}^\prime, \overline X_t$ in the second line. 

    \medskip

    The first and third summands at the bottom of the above display appear in the final result. Considering the final summand, we have
    \begin{align*}
        &\sum_{t=1}^{T} \bbE \Bigg\{ \left( \prod_{s=1}^{t-1} r_s^\prime \rho_s \right) \widehat m_t \widehat w_t \widehat \rho_t \bbE \left( \widehat \phi_t^\prime \mid \overline a_{t-1}^\prime, \overline X_t \right) \Bigg\} \\
        &= \sum_{t=1}^{T}\bbE \Bigg\{ \left( \prod_{s=1}^{t-1} r_s^\prime \rho_s \right) \widehat m_t \widehat w_t (\widehat \rho_t - \rho_t) \bbE \left( \widehat \phi_t^\prime \mid \overline a_{t-1}^\prime, \overline X_t \right) \Bigg\} + \sum_{t=1}^{T}\bbE \Bigg\{ \left( \prod_{s=1}^{t-1} r_s^\prime \rho_s \right) \widehat m_t \widehat w_t \rho_t \bbE \left( \widehat \phi_t^\prime \mid \overline a_{t-1}^\prime, \overline X_t \right) \Bigg\} 
    \end{align*}
    by adding zero. The first summand on the final line appears in the result. For the second summand, notice that
    \begin{align*}
        &\sum_{t=1}^{T} \bbE \left\{ \left( \prod_{s=1}^{t-1} r_s^\prime \rho_s \right) \widehat m_t \widehat w_t \rho_t \bbE \left( \widehat \phi_t^\prime \mid \overline a_{t-1}^\prime, \overline X_t \right) \right\} \\
        &= \sum_{t=1}^{T} \int_{\overline{\mathcal{X}}_t} \widehat m_t \widehat w_t \bbE \left( \widehat \phi_t^\prime \mid \overline a_{t-1}^\prime, \overline x_t \right) d\bbP(x_t \mid \overline a_{t-1}, \overline x_{t-1}) \prod_{s=1}^{t-1} w_s(\overline x_s) w_s^\prime(\overline x_s) d\bbP(x_s \mid \overline a_{s-1}, \overline x_s) \\
        &= \sum_{t=1}^{T} \bbE \left\{ \left( \prod_{s=1}^{t-1} r_s \right) \widehat m_t \widehat w_t \bbE \left( \widehat \phi_t^\prime \mid \overline a_{t-1}^\prime, \overline X_t \right) \right\}
    \end{align*}
    where the first line follows by repeating, from $s = t-1$ to $s=1$, iterated expectations on $\overline X_s, \overline A_{s-1} = \overline a_{s-1}$ and then iterated expectations on $\overline X_{s-1}, \overline A_{s-1} = \overline a_{s-1}^\prime$ and canceling terms, and the second line follows by repeating, from $s=1$ to $t-1$, iterated expectations on $\overline X_{s}, \overline A_s = \overline a_{s-1}$.

    \medskip

    Next, we revisit the outstanding terms:
    \begin{align*}
        &\sum_{t=1}^{T} \bbE \left\{ \left( \prod_{s=1}^{t-1} r_s \right) \widehat m_t \bbE \left( \widehat \phi_t \mid \overline X_t, \overline a_{t-1} \right) \widehat w_t^\prime \right\} + \bbE \left\{ \left( \prod_{s=1}^{t-1} r_s \right) \widehat m_t \widehat w_t \bbE \left( \widehat \phi_t^\prime \mid \overline a_{t-1}^\prime, \overline X_t \right) \right\} \\
        &= \sum_{t=1}^{T} \bbE \left[ \left( \prod_{s=1}^{t-1} r_s \right) \widehat m_t \left\{ \bbE \left( \widehat \phi_t \mid \overline X_t, \overline a_{t-1} \right) \widehat w_t^\prime + \widehat w_t \bbE \left( \widehat \phi_t^\prime \mid \overline a_{t-1}^\prime, \overline X_t \right) \right\} \right] \\
        &= \sum_{t=1}^{T} \bbE \left[ \left( \prod_{s=1}^{t-1} r_s \right) \widehat m_t \left\{ \bbE \left( \widehat \phi_t \mid \overline X_t, \overline a_{t-1} \right) \widehat w_t^\prime + \widehat w_t \bbE \left( \widehat \phi_t^\prime \mid \overline a_{t-1}^\prime, \overline X_t \right) + \widehat w_t \widehat w_t^\prime - w_t w_t^\prime \right\} \right] \\
        &+ \sum_{t=1}^{T} \bbE \left\{ \left( \prod_{s=1}^{t-1} r_s \right) \widehat m_t \left( w_t w_t^\prime - \widehat w_t \widehat w_t^\prime \right) \right\}
    \end{align*}
    where the final line follows by adding zero. Combining terms from the above argument yields the result. 
\end{proof}

\begin{lemma} \label{lem:eif-helper-3}
    Under the setup of Lemma~\ref{lem:eif}, 
    \begin{align*}
        \bbE \{ \widehat \varphi(Z) \} - \psi(\overline a_T) &= \sum_{t=1}^{T}  \bbE \left\{ \left( \prod_{s=1}^{t} \widehat r_s - \prod_{s=1}^{t}  r_s \right) \left( \widetilde m_t - \widehat m_t \right) \right\} \\
        &+ \sum_{t=1}^{T} \bbE \Bigg\{ \left( \prod_{s=1}^{t-1} \widehat r_s - \prod_{s=1}^{t-1} r_s \right) \widehat m_t \bbE \left( \widehat \phi_t \mid \overline X_t, \overline a_{t-1} \right) \widehat w_t^\prime \Bigg\} \\
        &+ \sum_{t=1}^T \bbE \Bigg\{ \left( \prod_{s=1}^{t-1} \widehat r_s^\prime \widehat \rho_s - \prod_{s=1}^{t-1} r_s^\prime \rho_s \right) \widehat m_t \widehat w_t \widehat \rho_t \bbE \left( \widehat \phi_t^\prime \mid \overline a_{t-1}^\prime, \overline X_t \right) \Bigg\} \\
        &+ \sum_{t=1}^{T}\bbE \Bigg\{ \left( \prod_{s=1}^{t-1} r_s^\prime \rho_s \right) \widehat m_t \widehat w_t (\widehat \rho_t - \rho_t) \bbE \left( \widehat \phi_t^\prime \mid \overline a_{t-1}^\prime, \overline X_t \right) \Bigg\} \\
        &+ \sum_{t=1}^{T} \bbE \left[ \left( \prod_{s=1}^{t-1} r_s \right) \widehat m_t \left\{ \bbE \left( \widehat \phi_t \mid \overline X_t, \overline a_{t-1} \right) \widehat w_t^\prime + \widehat w_t \bbE \left( \widehat \phi_t^\prime \mid \overline a_{t-1}^\prime, \overline X_t \right) + \widehat w_t \widehat w_t^\prime - w_t w_t^\prime \right\} \right].
    \end{align*}
\end{lemma}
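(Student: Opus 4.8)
The plan is to obtain the identity by simply adding the two preceding helper lemmas and cancelling a single pair of terms. Since $\widehat\varphi(Z) = \widehat\varphi_m(Z) + \widehat\varphi_w(Z)$ by construction, linearity of expectation gives $\bbE\{\widehat\varphi(Z)\} = \bbE\{\widehat\varphi_m(Z)\} + \bbE\{\widehat\varphi_w(Z)\}$; I would substitute the right-hand sides from Lemma~\ref{lem:eif-helper-1} and Lemma~\ref{lem:eif-helper-2}, and then subtract $\psi(\overline a_T) = m_0$, which removes the leading $m_0$ coming from Lemma~\ref{lem:eif-helper-1}. After this, five of the six surviving summands---the product-of-ratios-times-regression-error summand of Lemma~\ref{lem:eif-helper-1} together with the first four summands of Lemma~\ref{lem:eif-helper-2}---already coincide verbatim with the five summands appearing in the claimed identity. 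So the whole content of the argument reduces to showing that the one leftover summand from Lemma~\ref{lem:eif-helper-1},
\[
\sum_{t=1}^T \bbE\left[\left(\prod_{s=1}^{t-1} r_s\right) \bbE\left\{\widehat m_t\left(\widehat w_t \widehat w_t^\prime - w_t w_t^\prime\right) \mid \overline a_{t-1}, \overline X_{t-1}\right\}\right],
\]
exactly cancels the last summand of Lemma~\ref{lem:eif-helper-2}, namely $\sum_{t=1}^T \bbE\{(\prod_{s=1}^{t-1} r_s)\,\widehat m_t(w_t w_t^\prime - \widehat w_t\widehat w_t^\prime)\}$.

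For this cancellation I would observe that $\prod_{s=1}^{t-1} r_s(A_s,\overline X_s)$ is measurable with respect to $\sigma(\overline A_{t-1},\overline X_{t-1})$, whereas $\widehat m_t$, $w_t$, $w_t^\prime$, $\widehat w_t$, $\widehat w_t^\prime$ are all functions of $\overline X_t$ alone. Hence, conditioning on $(\overline A_{t-1},\overline X_{t-1})$ and using the tower property,
\[
\bbE\left\{\left(\prod_{s=1}^{t-1} r_s\right)\widehat m_t\left(w_t w_t^\prime - \widehat w_t\widehat w_t^\prime\right)\right\} = \bbE\left[\left(\prod_{s=1}^{t-1} r_s\right)\bbE\left\{\widehat m_t\left(w_t w_t^\prime - \widehat w_t\widehat w_t^\prime\right) \mid \overline A_{t-1},\overline X_{t-1}\right\}\right].
\]
Because $\prod_{s=1}^{t-1} r_s$ carries the indicators $\one(A_s = a_s)$ and so vanishes off the event $\{\overline A_{t-1} = \overline a_{t-1}\}$, the inner conditional expectation may be evaluated at $\overline A_{t-1} = \overline a_{t-1}$, which is precisely the conditioning written in Lemma~\ref{lem:eif-helper-1}. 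The two summands are then exact negatives of one another and cancel, and collecting the surviving terms gives the stated identity.

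I do not expect a substantive obstacle: the argument is essentially bookkeeping, and the only real care needed is to track the several summands and their signs correctly and to verify that every expectation manipulated above is finite, so that the tower property and the term-by-term rearrangement are legitimate. Finiteness is supplied by the hypotheses inherited from Lemma~\ref{lem:eif}---uniform boundedness of $r_t$ and $r_t^\prime$, uniform boundedness of $\widehat m_t$, and the $(2+\delta)$-moment bound on $\rho_t$---which together ensure that each summand of Lemmas~\ref{lem:eif-helper-1} and~\ref{lem:eif-helper-2} has finite expectation. The main thing to be careful about is therefore just matching the summands one-to-one and keeping the orientation of each difference straight.
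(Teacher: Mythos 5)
Your proposal is correct and matches the paper's argument: both simply add Lemmas~\ref{lem:eif-helper-1} and~\ref{lem:eif-helper-2}, subtract $\psi(\overline a_T)=m_0$, and cancel the final summands of the two lemmas. Your tower-property justification of that cancellation (using that $\prod_{s=1}^{t-1} r_s$ is $(\overline A_{t-1},\overline X_{t-1})$-measurable and vanishes off $\{\overline A_{t-1}=\overline a_{t-1}\}$) correctly fills in the step the paper leaves implicit.
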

\begin{proof}
    Notice that the final lines in the previous two lemmas cancel, and recall that $\varphi(Z) = \varphi_m(Z) + \varphi_w(Z)$.
\end{proof}

\begin{lemma} \label{lem:eif-helper-4}
    Under the setup of Lemma~\ref{lem:eif}, 
    \begin{align*}
        &\bbE \left\{ \widehat \phi_t(A_t, \overline X_t) \mid \overline X_t, \overline A_{t-1} = \overline a_{t-1}) \right\} + \bbE \left\{ \widehat \phi_t^\prime(A_t, \overline X_t) \mid \overline X_t, \overline A_{t-1} = \overline a_{t-1}^\prime \right\} \\
        &\hspace{0.5in}+ \widehat w_t\{ \pi_t(\overline X_t) \} \widehat w_t^\prime\{ \pi_t^\prime(\overline X_t) \}  - w_t\{ \pi_t(\overline X_t) \} w_t^\prime\{ \pi_t^\prime(\overline X_t) \} \\
        &= \dot w_t\{\widehat \pi(\overline X_t) \} \dot w_t \{ \widehat \pi^{\prime}(\overline X_t)\}\{ \widehat \pi(\overline X_t) - \pi(\overline X_t)\} \{ \pi^\prime(\overline X_t) - \widehat \pi^\prime(\overline X_t)\} \\
        &\hspace{0.5in}- \frac{1}{2} w_t\{\widehat \pi(\overline X_t)\}\ddot w_t\{\widehat \pi^\prime(\overline X_t)\} \{ \widehat \pi^\prime(\overline X_t) - \pi^\prime(\overline X_t)\}^2 - \frac{1}{2} w_t\{ \widehat \pi^\prime(\overline X_t) \} \ddot w_t\{ \widehat \pi(\overline X_t) \} \{ \widehat \pi(\overline X_t) - \pi(\overline X_t)\}^2 \\
        &\hspace{0.5in}+ o\big[ \{ \widehat \pi(\overline X_t) - \pi(\overline X_t)\}^2 \big] + o\big[ \{ \widehat \pi^\prime(\overline X_t) + \pi^\prime(\overline X_t)\}^2 \big]
    \end{align*}
    and
    \begin{align*}
        \bbE \left\{ \widehat \phi_t(A_t, \overline X_t) \mid \overline X_t, \overline A_{t-1} = \overline a_{t-1}) \right\} &= \dot w_t(\widehat \pi_t) ( \pi_t - \widehat \pi_t ) 
    \end{align*}
    where $\ddot w_t$ denotes the second derivative of $w_t(x)$.
\end{lemma}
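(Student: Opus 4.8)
The plan is to prove the two displays in turn, starting with the short conditional-expectation identity and then feeding it into a bivariate Taylor expansion for the first display.

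\textbf{Second display.} By definition $\widehat\phi_t(A_t,\overline X_t) = \dot w_t\{\widehat\pi_t(\overline X_t)\}\{\one(A_t = a_t) - \widehat\pi_t(\overline X_t)\}$, and $\widehat\pi_t$ (being fit on the training sample, with $\overline A_{t-1} = \overline a_{t-1}$ plugged in) is a function of $\overline X_t$ alone, hence constant given the conditioning event. Since $\pi_t(\overline X_t) = \bbP(A_t = a_t \mid \overline X_t, \overline A_{t-1} = \overline a_{t-1})$ by definition, we have $\bbE\{\one(A_t = a_t) \mid \overline X_t, \overline A_{t-1} = \overline a_{t-1}\} = \pi_t(\overline X_t)$ wherever this conditioning is well-defined (elsewhere the convention setting propensity scores to zero makes the surrounding product weights in Lemma~\ref{lem:eif-helper-3} vanish, so the point is moot). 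Pulling the $\overline X_t$-measurable factor $\dot w_t\{\widehat\pi_t\}$ outside the conditional expectation gives $\bbE\{\widehat\phi_t \mid \overline X_t, \overline A_{t-1} = \overline a_{t-1}\} = \dot w_t\{\widehat\pi_t\}(\pi_t - \widehat\pi_t)$, the claimed identity; the identity for $\widehat\phi_t^\prime$ follows verbatim with $\one(A_t = a_t^\prime)$ and $\pi_t^\prime(\overline X_t) = \bbP(A_t = a_t^\prime \mid \overline X_t, \overline A_{t-1} = \overline a_{t-1}^\prime)$.

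\textbf{First display.} Substituting the two identities above, together with the multipliers $w_t^\prime\{\widehat\pi_t^\prime\}$ and $w_t\{\widehat\pi_t\}$ that accompany them in Lemma~\ref{lem:eif-helper-3}, the left-hand side becomes
\[
\dot w_t\{\widehat\pi_t\}(\pi_t - \widehat\pi_t)\,w_t^\prime\{\widehat\pi_t^\prime\} + w_t\{\widehat\pi_t\}\,\dot w_t^\prime\{\widehat\pi_t^\prime\}(\pi_t^\prime - \widehat\pi_t^\prime) + w_t\{\widehat\pi_t\}w_t^\prime\{\widehat\pi_t^\prime\} - w_t\{\pi_t\}w_t^\prime\{\pi_t^\prime\}.
\]
I would then expand $w_t\{\pi_t\}$ around $\widehat\pi_t$ and $w_t^\prime\{\pi_t^\prime\}$ around $\widehat\pi_t^\prime$ to second order (Taylor's theorem with Peano remainder, using twice differentiability and boundedness of $w_t,w_t^\prime$ and their first two derivatives), multiply the two univariate expansions, and sort terms by total order. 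Two bookkeeping facts then finish the argument. First, the first-order pieces of the product expansion of $w_t\{\pi_t\}w_t^\prime\{\pi_t^\prime\}$, namely $\dot w_t\{\widehat\pi_t\}(\pi_t-\widehat\pi_t)w_t^\prime\{\widehat\pi_t^\prime\}$ and $w_t\{\widehat\pi_t\}\dot w_t^\prime\{\widehat\pi_t^\prime\}(\pi_t^\prime-\widehat\pi_t^\prime)$, cancel exactly against the two $\bbE\{\widehat\phi\}$ contributions, leaving the three second-order terms $-\tfrac12 w_t^\prime\{\widehat\pi_t^\prime\}\ddot w_t\{\widehat\pi_t\}(\widehat\pi_t-\pi_t)^2$, $\ \dot w_t\{\widehat\pi_t\}\dot w_t^\prime\{\widehat\pi_t^\prime\}(\widehat\pi_t-\pi_t)(\pi_t^\prime-\widehat\pi_t^\prime)$, and $-\tfrac12 w_t\{\widehat\pi_t\}\ddot w_t^\prime\{\widehat\pi_t^\prime\}(\widehat\pi_t^\prime-\pi_t^\prime)^2$. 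Second, every mixed term of total order three or more — e.g.\ terms of the form $|\pi_t-\widehat\pi_t|(\pi_t^\prime-\widehat\pi_t^\prime)^2$, as well as the Peano remainders multiplied by bounded factors — is $o\big(\{\widehat\pi_t-\pi_t\}^2\big)+o\big(\{\widehat\pi_t^\prime-\pi_t^\prime\}^2\big)$, because by the arithmetic--geometric mean inequality $|\pi_t-\widehat\pi_t|(\pi_t^\prime-\widehat\pi_t^\prime)^2 \le \tfrac12|\pi_t^\prime-\widehat\pi_t^\prime|\big\{(\pi_t-\widehat\pi_t)^2+(\pi_t^\prime-\widehat\pi_t^\prime)^2\big\}$ and the prefactor tends to $0$. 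Collecting the three surviving second-order terms and the $o(\cdot)$ remainder gives exactly the right-hand side of the first display.

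\textbf{Main obstacle.} There is no genuine difficulty; the only care required is the sign and factor bookkeeping in the bivariate expansion — in particular retaining the $w_t^\prime\{\widehat\pi_t^\prime\}$ and $w_t\{\widehat\pi_t\}$ multipliers (which the statement's shorthand appears to suppress) so that the first-order terms cancel, keeping the leading cross term $\dot w_t\{\widehat\pi_t\}\dot w_t^\prime\{\widehat\pi_t^\prime\}(\widehat\pi_t-\pi_t)(\pi_t^\prime-\widehat\pi_t^\prime)$, and confirming that all remaining mixed terms are absorbed into the $o(\cdot)$ remainder. (The ``$+$'' inside the second $o$ on the right-hand side should read ``$-$''.)
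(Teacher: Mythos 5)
Your proof is correct and follows essentially the same route as the paper's: compute $\bbE\{\widehat\phi_t\mid\cdot\}$ and $\bbE\{\widehat\phi_t^\prime\mid\cdot\}$ directly, then take a second-order Taylor expansion of $w_t(\pi_t)w_t^\prime(\pi_t^\prime)$ around the estimated propensity scores and cancel the zeroth- and first-order terms. Your observation that the multipliers $w_t^\prime\{\widehat\pi_t^\prime\}$ and $w_t\{\widehat\pi_t\}$ are suppressed in the statement (but present in the paper's proof and in the use within Lemma~\ref{lem:eif-helper-3}), and that the ``$+$'' in the second $o(\cdot)$ term is a typo, is also accurate.
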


\begin{proof}
    We have
    \begin{align*}
        &\bbE \left\{ \widehat \phi_t(A_t, \overline X_t) \mid \overline X_t, \overline A_{t-1} = \overline a_{t-1}) \right\} + \bbE \left\{ \widehat \phi_t^\prime(A_t, \overline X_t) \mid \overline X_t, \overline A_{t-1} = \overline a_{t-1}^\prime \right\} \\
        &\hspace{0.5in}+ \widehat w_t\{ \pi_t(\overline X_t) \} \widehat w_t^\prime\{ \pi_t^\prime(\overline X_t) \}  - w_t\{ \pi_t(\overline X_t) \} w_t^\prime\{ \pi_t^\prime(\overline X_t) \} \\
        &= \dot w_t(\widehat \pi)(\pi - \widehat \pi) w_t^\prime(\widehat \pi^\prime) + \dot w_t^\prime(\widehat \pi^\prime)(\pi^\prime - \widehat \pi^\prime) w_t(\widehat \pi) + w_t(\widehat \pi) w_t^\prime(\widehat \pi^\prime) - w_t(\pi) w_t(\pi^\prime).
    \end{align*}
    Second-order Taylor expansions of $w_t(\pi)$ and $w_t^\prime(\pi^\prime)$ around $w_t(\widehat \pi)$ and $w_t^\prime(\widehat \pi^\prime)$, respectively, yield
    \begin{align*}
        w_t(\pi) w_t^\prime(\pi^\prime) &= w_t(\widehat \pi)w_t^\prime(\widehat \pi^\prime) + \dot w_t(\widehat \pi^\prime) (\pi^\prime - \widehat \pi^\prime) w_t(\widehat \pi) + \dot w_t(\widehat \pi) (\pi - \widehat \pi) w_t^\prime(\widehat \pi^\prime) \\
        &+ \dot w_t(\widehat \pi)\dot w_t^\prime(\widehat \pi^{\prime})(\widehat \pi - \pi)(\widehat \pi^\prime - \pi^\prime) \\
        &+ \frac{1}{2} w_t(\widehat \pi)\ddot w_t^\prime(\widehat \pi^\prime)(\widehat \pi^\prime - \pi^\prime)^2 + \frac{1}{2} w_t^\prime(\widehat \pi^\prime)\ddot w_t(\widehat \pi)(\widehat \pi - \pi)^2 \\
        &+ o\{ (\widehat \pi - \pi)^2 \} + o\{ (\widehat \pi^\prime + \pi^\prime)^2 \}.
    \end{align*}
    The first three summands on the RHS cancel with the first three summands on the RHS of the original expression, yielding the result.
\end{proof}

\subsubsection*{Proof of Lemma~\ref{lem:eif}}

\begin{proof}
   Lemmas~\ref{lem:eif-helper-1}, \ref{lem:eif-helper-2}, \ref{lem:eif-helper-3}, and \ref{lem:eif} imply that $\bbE \{ \widehat \varphi(Z) - \varphi(Z) \}$ is a second-order product of errors in nuisance functions. By the same argument, the functional satisfies a von Mises expansion with second-order remainder term. The result will then follow by \citet[Lemma 2]{kennedy2023semiparametric}, combined with the fact that $\bbV \{ \varphi(Z) \}$ is bounded. 
   
   \bigskip
   
   To conclude, we must justify that the variance of $\varphi(Z)$ is bounded. Because each summand is conditionally mean zero, the cross-covariances are zero and we can focus on just the variance of each summand. Because the sequential regression functions are almost surely bounded and $\frac{w_t w_t^\prime}{\pi_t \pi_t^\prime}$ is bounded by the construction of the weights, all terms are bounded besides the second term in $\varphi_w$. That second term is
   \[
   \left\{ \prod_{s=1}^{t-1} r_s^\prime(A_s, \overline X_s) \rho_s(\overline X_s) \right\} m_t(\overline X_t) w_t\{ \pi_t(\overline X_t) \} \rho_t(\overline X_t) \phi_t^\prime(A_t, \overline X_t).
   \]
   The variance of this term is bounded by first noting that it can be bounded by \( \bbE \left( \prod_{s=1}^{t} \rho_s^2 \mid \prod_{s=1}^{t} \rho_s < \infty \right) \) by the argument above and the assumed boundedness of $m_t$ and because the weights enforce $\rho_s < \infty$.  By iterated expectations, we have
   \[
   \bbE \left( \prod_{s=1}^{t} \rho_s^2  \mid \prod_{s=1}^{t} \rho_s < \infty \right) = \bbE \left\{ \prod_{s=1}^{t-1} \rho_s^2 \bbE \left( \rho_t^2 \mid \overline X_{t-1}, \rho_t < \infty \right)  \mid \prod_{s=1}^{t-1} \rho_s < \infty \right\}.   
   \]
   Then, by conditional H\"{o}lder's inequality,
   \[
   \bbE \left( \rho_t^2 \mid \overline X_{t-1}, \rho_t < \infty \right) \leq \bbE \left\{ \left( \rho_t^2 \right)^{\tfrac{2+\delta}{2}} \mid \overline X_{t-1}, \rho_t < \infty \right\}^{\tfrac{2}{2+\delta}} = \bbE  \left( \rho_t^{2+\delta} \mid \overline X_{t-1}, \rho_t < \infty \right)^{\tfrac{2}{2+\delta}},
   \]
   which is almost surely bounded by assumption in the lemma. Hence,
   \[
   \bbE \left( \prod_{s=1}^{t} \rho_s^2 \mid \prod_{s=1}^{t} \rho_s < \infty \right) \lesssim \bbE \left( \prod_{s=1}^{t-1} \rho_s^2 \mid \prod_{s=1}^{t-1} \rho_s < \infty \right).   
   \]
   Repeating this argument $t-2$ times, and noting that $\rho_1 = 1$, implies that $\bbV \{ \varphi(Z) \} $ is bounded, which yields the result.
\end{proof}

\subsection*{Proof of Theorem~\ref{thm:convergence}}

\begin{proof}
    We begin with the typical decomposition:
    \begin{align*}
        \widehat \psi(\overline a_T) - \psi(\overline a_T) &= \bbP_n \{ \widehat \varphi(Z) \} - \bbP \{ \varphi(Z) \} \\
        &= (\bbP_n - \bbP) \{ \varphi(Z) \} + (\bbP_n - \bbP) \{ \widehat \varphi(Z) - \varphi(Z) \} + \bbE \left\{ \widehat \varphi(Z) - \varphi(Z) \right\}.
    \end{align*}
    where the first line follows by definition, the second by adding zero and because $\bbP \{ \varphi(Z) \} = 0$. The second term is $o_\bbP(n^{-1/2})$ by Chebyshev's inequality and the assumption that $\lVert \widehat \varphi - \varphi \rVert = o_\bbP(1)$ (cf. \citet[Lemma 2]{kennedy2020sharp}). Meanwhile, the third term is the bias. It satisfies the decomposition in Lemma~\ref{lem:eif-helper-3}; as a reminder, this is:
    \begin{align*}
        \bbE \{ \widehat \varphi(Z) - \varphi(Z) \} &= \sum_{t=1}^{T}  \bbE \left\{ \left( \prod_{s=1}^{t} \widehat r_s - \prod_{s=1}^{t}  r_s \right) \left( \widetilde m_t - \widehat m_t \right) \right\} \\
        &+ \sum_{t=1}^{T} \bbE \Bigg\{ \left( \prod_{s=1}^{t-1} \widehat r_s - \prod_{s=1}^{t-1} r_s \right) \widehat m_t \bbE \left( \widehat \phi_t \mid \overline X_t, \overline a_{t-1} \right) \widehat w_t^\prime \Bigg\} \\
        &+ \sum_{t=1}^T \bbE \Bigg\{ \left( \prod_{s=1}^{t-1} \widehat r_s^\prime \widehat \rho_s - \prod_{s=1}^{t-1} r_s^\prime \rho_s \right) \widehat m_t \widehat w_t \widehat \rho_t \bbE \left( \widehat \phi_t^\prime \mid \overline a_{t-1}^\prime, \overline X_t \right) \Bigg\} \\
        &+ \sum_{t=1}^{T}\bbE \Bigg\{ \left( \prod_{s=1}^{t-1} r_s^\prime \rho_s \right) \widehat m_t \widehat w_t (\widehat \rho_t - \rho_t) \bbE \left( \widehat \phi_t^\prime \mid \overline a_{t-1}^\prime, \overline X_t \right) \Bigg\} \\
        &+ \sum_{t=1}^{T} \bbE \left[ \left( \prod_{s=1}^{t-1} r_s \right) \widehat m_t \left\{ \bbE \left( \widehat \phi_t \mid \overline X_t, \overline a_{t-1} \right) \widehat w_t^\prime + \widehat w_t \bbE \left( \widehat \phi_t^\prime \mid \overline a_{t-1}^\prime, \overline X_t \right) + \widehat w_t \widehat w_t^\prime - w_t w_t^\prime \right\} \right].
    \end{align*}
    The first, second, and fifth lines in the display above can be addressed quite easily: by the boundedness conditions on $\widehat m_t$ and $m_t$, Lemma~\ref{lem:eif-helper-4}, H\"{o}lder's inequality, the triangle inequality, and Cauchy-Schwarz, we can get products of $L_2$ errors:
    \begin{align*}
         &\left| \sum_{t=1}^{T} \bbE \left\{ \left( \prod_{s=1}^{t} \widehat r_s - \prod_{s=1}^{t}  r_s \right) \left( \widetilde m_t - \widehat m_t \right) \right\} \right| \lesssim \sum_{t=1}^T \sum_{s=1}^t \| \widehat r_s - r_s \| \| \widehat m_t - \widetilde m_t \|, \\
         &\left| \sum_{t=1}^{T} \bbE \Bigg\{ \left( \prod_{s=1}^{t-1} \widehat r_s - \prod_{s=1}^{t-1} r_s \right) \widehat m_t \bbE \left( \widehat \phi_t \mid \overline X_t, \overline a_{t-1} \right) \widehat w_t^\prime \Bigg\} \right| \lesssim \sum_{t=1}^T \sum_{s=1}^{t-1} \| \widehat r_s - r_s \| \| \widehat \pi_t - \pi_t \| \text{, and} \\
         &\sum_{t=1}^{T} \bbE \left[ \left( \prod_{s=1}^{t-1} r_s \right) \widehat m_t \left\{ \bbE \left( \widehat \phi_t \mid \overline X_t, \overline a_{t-1} \right) \widehat w_t^\prime + \widehat w_t \bbE \left( \widehat \phi_t^\prime \mid \overline a_{t-1}^\prime, \overline X_t \right) + \widehat w_t \widehat w_t^\prime - w_t w_t^\prime \right\} \right] \lesssim \sum_{t=1}^{T} \| \widehat \pi_t - \pi_t \|^2.
    \end{align*}
    The slightly more complicated terms are the third and fourth terms, which involve $\rho_s$. They are more complicated because we don't have uniform boundedness on $\rho_s$. Instead, we have a bounded $2+\delta$ moment. We first consider the third line from the original display:
    \begin{align*}
        &\left| \bbE \Bigg\{ \left( \prod_{s=1}^{t-1} \widehat r_s^\prime \widehat \rho_s - \prod_{s=1}^{t-1} r_s^\prime \rho_s \right) \widehat m_t \widehat w_t \widehat \rho_t \bbE \left( \widehat \phi_t^\prime \mid \overline a_{t-1}^\prime, \overline X_t \right) \Bigg\} \right| \\
        &\hspace{1in}\leq \sum_{s=1}^{t-1} \left| \bbE \Bigg[ \left( \prod_{u<s} r_u^\prime \rho_u \right) \left\{ (\widehat r_s^\prime - r_s^\prime) \widehat \rho_s + r_s^\prime(\widehat \rho_s - \rho_s) \right\} \left( \prod_{u>s}^{t-1} \widehat r_u^\prime \widehat \rho_u \right) \widehat m_t \widehat w_t \widehat \rho_t \bbE \left( \widehat \phi_t^\prime \mid \overline a_{t-1}^\prime, \overline X_t \right) \Bigg] \right| \\
        &\leq \sum_{s=1}^{t-1} \bbE \left\{ \left| \left( \prod_{u<s} \rho_u \right) \widehat \rho_s \left( \prod_{u>s}^{t-1} \widehat \rho_u \right) \widehat \rho_t (\widehat r_s^\prime - r_s^\prime) (\widehat \pi_t^\prime - \pi_t^\prime) \right| \right\} + \bbE \left\{ \left| \left( \prod_{u<s} \rho_u \right) \left( \prod_{u>s}^{t-1} \widehat \rho_u \right) \widehat \rho_t (\widehat \rho_s - \rho_s)  (\widehat \pi_t^\prime - \pi_t^\prime) \right| \right\} \\
        &\leq \sum_{s=1}^{t-1} \bbE \left\{ \left( \prod_{u<s} \rho_u^2 \right) \widehat \rho_s^2 \left( \prod_{u>s}^{t-1} \widehat \rho_u^2 \right) \widehat \rho_t^2 \right\}^{1/2} \bbE\left\{ (\widehat \pi_t^\prime - \pi_t^\prime)^2 (\widehat r_s^\prime - r_s^\prime)^2 \right\}^{1/2} \\
        &\hspace{1in}+ \bbE \left\{ \left( \prod_{u<s} \rho_u^2 \right) \left( \prod_{u>s}^{t-1} \widehat \rho_u^2 \right) \widehat \rho_t^2 \right\}^{1/2} \bbE \left\{ (\widehat \rho_s - \rho_s)^2 (\widehat \pi_t^\prime - \pi_t^\prime)^2 \right\}^{1/2} \\
        &\lesssim \sum_{s=1}^{t-1} \bbE\left\{ (\widehat \pi_t^\prime - \pi_t^\prime)^2 (\widehat r_s^\prime - r_s^\prime)^2 \right\}^{1/2} + \bbE \left\{ (\widehat \rho_s - \rho_s)^2 (\widehat \pi_t^\prime - \pi_t^\prime)^2 \right\}^{1/2} \\
        &\leq \sum_{s=1}^{t-1} \| \widehat \pi_t^\prime - \pi_t^\prime \|_4 \| \widehat r_s^\prime - r_s^\prime \|_4 + \| \widehat \pi_t^\prime - \pi_t^\prime \|_4 \|\widehat \rho_s - \rho_s \|_4
    \end{align*}
    where the first inequality follows by telescoping and the triangle inequality, the second line by H\"{o}lder's inequality ($\ell_\infty-\ell_1$) and the triangle inequality, the third inequality by Cauchy-Schwarz, the fourth inequality by the same argument as in the proof of Lemma~\ref{lem:eif}, but also using the moment bound on $\widehat \rho_t$ and noting that $r_s^\prime$ and $\widehat r_s^\prime$ and $\widehat w_t$ guaranteed that none of the covariate density ratios (estimated or true) were infinite, and the final inequality follows by Cauchy-Schwarz again, where $\| f(Z) \|_4 = \bbE \{ f(Z)^4 \}^{1/4}$ denotes the $\ell_4(\bbP)$ norm.

    \bigskip

    The same type of argument applies to the fourth line from the original display:
    \begin{align*}
        \left| \bbE \Bigg\{ \left( \prod_{s=1}^{t-1} r_s^\prime \rho_s \right) \widehat m_t \widehat w_t (\widehat \rho_t - \rho_t) \bbE \left( \widehat \phi_t^\prime \mid \overline a_{t-1}^\prime, \overline X_t \right) \Bigg\} \right| &\lesssim \| \widehat \rho_t - \rho_t \|_4 \| \widehat \pi_t^\prime - \pi_t^\prime \|_4. 
    \end{align*}
    Combining all terms yields the bounds on the bias:
    \begin{align*}
        \left| \bbE \left\{ \widehat \varphi(Z) - \varphi(Z) \right\} \right| &\lesssim \sum_{t=1}^T \sum_{s=1}^t \| \widehat r_s - r_s \| \| \widehat m_t - \widetilde m_t \| \\
        &+ \sum_{t=1}^{T} \sum_{s=1}^{t-1} \left\|  \widehat r_s -  r_s \right\| \left\| \widehat \pi_t - \pi_t \right\| \\
        &+ \sum_{t=1}^{T} \| \widehat \pi_t - \pi_t \|^2 \\
        &+ \sum_{t=1}^{T} \sum_{s=1}^{t-1} \| \widehat \pi_t^\prime - \pi_t^\prime \|_4 \| \widehat r_s^\prime - r_s^\prime \|_4 + \| \widehat \pi_t^\prime - \pi_t^\prime \|_4 \|\widehat \rho_s - \rho_s \|_4 \\
        &+ \sum_{t=1}^{T} \| \widehat \rho_t - \rho_t \|_4 \| \widehat \pi_t^\prime - \pi_t^\prime \|_4.
    \end{align*}
    By the conditions of the theorem, $\left| \bbE \left\{ \widehat \varphi(Z) - \varphi(Z) \right\} \right| = o_\bbP(n^{-1/2}).$

    \medskip
    
    Returning to the main decomposition, we then have
    \[
    \sqrt{\frac{n}{\bbV\{ \varphi(Z) \}}}\left\{ \widehat \psi(\overline a_T) - \psi(\overline a_T) \right\} = \sqrt{\frac{n}{\bbV\{ \varphi(Z) \}}} (\bbP_n - \bbP)\{ \varphi(Z) \} + o_\bbP(1) \indist N(0,1)
    \]
    by the central limit theorem.

    \medskip

    Finally, note that $\widehat \sigma^2 \inprob \bbV \{ \varphi(Z) \}$ because $\| \widehat \varphi - \varphi \| = o_\bbP(1)$. Therefore, the result follows by Slutsky's theorem.
\end{proof}

\end{document}